\newtheorem{proposition}{Proposition}
\theoremstyle{break}
\begin{document}
\title{Compact Analytical Model for Real-Time Evaluation of OAM-Based Inter-Satellite Links}

\author{Mohammad~Taghi~Dabiri,~and~Mazen~Hasna,~{\it Senior Member,~IEEE}
	\thanks{Mohammad Taghi Dabiri and Mazen Hasna are with the Department of Electrical Engineering, Qatar University, Doha, Qatar (e-mail: m.dabiri@qu.edu.qa; hasna@qu.edu.qa).}

	\thanks{This publication was made possible by NPRP14C-0909-210008 from the Qatar Research, Development and Innovation (QRDI) Fund (a member of  Qatar Foundation). }	
}

\maketitle
\begin{abstract}
This paper presents an efficient analytical framework for evaluating the performance of inter-satellite communication systems utilizing orbital angular momentum (OAM) beams under pointing errors. An accurate analytical model is first developed to characterize intermodal crosstalk caused by beam misalignment in OAM-based inter-satellite links. Building upon this model, we derive efficient expressions to analyze and optimize system performance in terms of bit error rate (BER). Unlike traditional Monte Carlo-based methods that are computationally intensive, the proposed approach offers accurate performance predictions. 
This enables a substantial decrease in computation time while maintaining high accuracy, thanks to the use of analytical expressions for both crosstalk and BER.
This fast and accurate evaluation capability is particularly critical for dynamic low Earth orbit (LEO) satellite constellations, where network topology and channel conditions change rapidly, requiring real-time link adaptation. Furthermore, we systematically design and evaluate asymmetric OAM mode sets, which significantly outperform symmetric configurations in the presence of pointing errors. Our results also reveal key insights into the interaction between beam divergence, tracking accuracy, and link distance, demonstrating that the proposed framework enables real-time optimization of system parameters with high fidelity. The analytical findings are rigorously validated against extensive Monte Carlo simulations, confirming their practical applicability for high-mobility optical wireless systems such as LEO satellite networks.
\end{abstract}

\begin{IEEEkeywords}
Orbital Angular Momentum (OAM), FSO Communication, Pointing Errors, Inter-Satellite Links.
\end{IEEEkeywords}
\IEEEpeerreviewmaketitle

\section{Introduction}
\subsection{Background and Motivation}
The rapid advancement of global communication infrastructure is increasingly relying on Low Earth Orbit (LEO) satellite constellations, which are expected to play a central role in various strategic applications, including satellite-based internet (e.g., Starlink), Earth observation, global navigation systems (GNSS), and deep-space communication \cite{dabiri2024modulating,dabiri2024modulating2}. These large-scale constellations demand ultra-high-capacity, low-latency, and resilient inter-satellite communication links capable of supporting dynamic and frequently changing topologies \cite{abdulwahid2024inter}.

Orbital Angular Momentum (OAM) multiplexing has recently emerged as a promising enabler for such high-performance optical satellite links, owing to its unique helical phase structure that allows for orthogonal spatial mode multiplexing and significantly enhanced spectral efficiency \cite{noor2022review, chen2020orbital}.
In contrast to traditional radio-frequency, millimeter-wave, and terahertz communication systems, OAM-based free-space optics (FSO) offer a scalable and secure physical-layer solution with minimal spectrum congestion and inherent spatial diversity, which is particularly advantageous in the spaceborne domain \cite{elsayed2024performance}.
Despite its potential, the practical deployment of OAM systems in LEO constellations faces several key technical challenges. These include extreme sensitivity to pointing errors due to narrow beam profiles, significant mode-dependent loss and intermodal crosstalk under misalignment, and system complexity associated with beam generation, multiplexing, and alignment in high-mobility scenarios. Without addressing these issues, the benefits of OAM in realistic orbital environments remain largely theoretical.

This study is motivated by the need to bridge this gap between theory and practice. We develop an analytical framework that not only models the intermodal crosstalk induced by pointing errors but also enables accurate and efficient evaluation of system performance in terms of bit error rate (BER). By providing both analytical insights and practical design guidelines—such as optimal mode set selection and divergence control—this work contributes to the realization of robust, high-throughput OAM-based inter-satellite links tailored to the dynamic and challenging conditions of future LEO constellations.

\subsection{Literature Review}
Although extensive research has been conducted on analyzing the effects of atmospheric turbulence on OAM links, studies on the impact of pointing errors on mode crosstalk in OAM systems have been limited to works such as \cite{lavery2017free, ren2016experimental, pang2018qpsk, wang2021learning, gong2021recognition, ramakrishnan2024mitigation, gangwar2023mitigation, willner2021causes, xie2015performance, hu2023aiming, zhao2018spiral, alababneh2019crosstalk, elamassie2023characterization, elamassie2024modeling, djordjevic2023effect, li2024misalignment, elmeadawy2022hybrid, zhang2022information, raza2023data, mabena2023beam}.

The experimental studies in \cite{lavery2017free, ren2016experimental, pang2018qpsk} shed light on the impact of misalignment on OAM system performance under real-world conditions. In \cite{lavery2017free}, a 1.6~km free-space link showed that weak aberrations caused vortex splitting in higher-order OAM modes, degrading phase purity essential for spatial multiplexing and quantum key distribution (QKD). Similarly, \cite{ren2016experimental} revealed significant power fluctuations and crosstalk in a 400~Gbit/s OAM-multiplexed link over 120~m without laser tracking. In \cite{pang2018qpsk}, small apertures and lateral misalignments in a 400~Gbit/s link resulted in power loss and increased crosstalk, underscoring the importance of precise alignment for reliable OAM communications.

The studies in \cite{wang2021learning, gong2021recognition} emphasize the challenges posed by pointing errors in OAM-based FSO systems. In \cite{wang2021learning}, a machine learning-based alignment-free fractal interferometer was proposed for robust mode recognition under severe misalignment, while \cite{gong2021recognition} utilized convolutional neural networks to enhance OAM mode recognition by addressing the combined effects of pointing errors and limited apertures. Beamforming, as demonstrated in \cite{ramakrishnan2024mitigation}, improves signal quality, reduces error vector magnitudes, and enhances QPSK signal constellations. The radial shearing self-interferometric method in \cite{gangwar2023mitigation} mitigates lateral misalignment in the generation of perfect vortex beams. A comprehensive review in \cite{willner2021causes} discusses adaptive optics for real-time phase correction and spatial mode pre-compensation to enhance robustness. The review also highlights digital signal processing techniques such as multiple-input multiple-output (MIMO) equalization, which effectively suppresses crosstalk and maintains signal integrity in dynamic conditions like UAV-based links. Despite these advancements \cite{lavery2017free, ren2016experimental, pang2018qpsk, wang2021learning, gong2021recognition, ramakrishnan2024mitigation, gangwar2023mitigation, willner2021causes}, a unified mathematical framework to model and optimize the effects of pointing errors remains absent, limiting the development of scalable and robust OAM communication systems.

In \cite{xie2015performance}, the performance of OAM-based FSO links under pointing errors was analyzed, with a focus on power loss and crosstalk caused by fixed lateral displacements and angular misalignments, providing key metrics and design insights for such systems. Similarly, \cite{hu2023aiming} showed that angular tilt leads to spectral broadening and power leakage. The study in \cite{zhao2018spiral} examined the effect of pointing errors on the spiral spectrum of Laguerre-Gaussian (LG) beams, emphasizing lateral displacement and angular inclination, though it did not address advanced mitigation strategies or adaptive techniques for practical scenarios.
In \cite{alababneh2019crosstalk}, crosstalk in slightly misaligned FSO interconnects was analyzed using a novel diffraction model based on the cylindrical form of the Collins diffraction integral. Approximate closed-form expressions for the optical field of LG beams propagating through lens-based FSO interconnects with finite circular apertures were derived. More comprehensive statistical modeling was presented in \cite{elamassie2023characterization} and \cite{elamassie2024modeling}, where Beta and Generalized Gamma distributions were proposed to characterize crosstalk. According to these studies, fixed displacements beyond 6 mm lead to significant crosstalk.
While \cite{elamassie2023characterization} and \cite{elamassie2024modeling} provided valuable foundational analyses, their statistical models are limited. They overlooked critical parameters such as beam width and receiver aperture size, which this study identifies as highly influential. Even small variations in these parameters can substantially impact the crosstalk model, emphasizing the need for a more comprehensive approach to account for these factors in OAM system design and analysis.

Another category of research focused on statistical modeling and optimization of OAM-based systems under pointing errors. In \cite{djordjevic2023effect}, the impact of pointing errors on the BER of OAM-based FSO systems was analyzed, demonstrating how quasi-cyclic low-density parity-check (LDPC) codes combined with multidimensional constellations mitigated degradation under moderate turbulence. Similarly, \cite{li2024misalignment} investigated state-dependent misalignment and turbulence in OAM-based high-dimensional QKD, and proposed an atmospheric channel model to optimize secure key rate (SKR) and quantum bit error rate (QBER) under finite-key effects.
In \cite{elmeadawy2022hybrid}, a hybrid OAM-Multi Pulse-Position Modulation (M-PPM) scheme was introduced to counter pointing errors and turbulence, showing significant BER improvements over standalone OAM or MPPM techniques. Closed-form BER expressions revealed enhanced power and spectral efficiency even under moderate and strong turbulence. Similarly, \cite{zhang2022information} provided analytical evaluations of information capacity loss in LG beam-based optical links, accounting for pointing errors and turbulence, and identified optimal beam parameters to reduce performance degradation.
In practical demonstrations, \cite{raza2023data} achieved 400 Gbps per wavelength using LG modes and a 4-PAM modulation scheme with intensity modulation and direct detection (IMDD), reporting acceptable BERs at distances of 150 m, 500 m, and 700 m. Furthermore, \cite{mabena2023beam} investigated the beam quality factor for aberrated LG beams under astigmatism and spherical aberrations, revealing how beam width and LG mode order influenced quadratic, cubic, or quartic degradation patterns based on the aberration type.
These studies offered valuable insights into the design and optimization of OAM systems under practical challenges, though further work was required to unify these findings into a comprehensive framework for analyzing and mitigating pointing errors across diverse scenarios.

However, it should be noted that the works in \cite{djordjevic2023effect, li2024misalignment, elmeadawy2022hybrid, zhang2022information, raza2023data, mabena2023beam} relied on the pointing error models originally proposed for conventional FSO systems, as presented in \cite{farid2007outage}. In contrast, more recent studies \cite{dabiri2024advancing,dabiri2025oam} demonstrated that, due to the distinct beam characteristics of OAM systems compared to conventional FSO, using these traditional models led to inaccuracies and deviations from actual values. In \cite{dabiri2024advancing}, a precise model for the impact of pointing errors on intermodal crosstalk was presented, incorporating realistic channel parameters. Additionally, the study proposed detector structures to mitigate the effects of crosstalk. However, the crosstalk modeling in \cite{dabiri2024advancing} relied on solving multi-dimensional numerical integrals, which made system analysis and optimization computationally intensive.
Consequently, no closed-form analytical model of pointing errors on OAM crosstalk, incorporating realistic channel parameters, had been developed to date—a limitation that remains critical for accelerating the analysis, design, and optimization of such dynamic systems.

While existing studies have provided valuable insights, most rely on either computationally expensive Monte Carlo simulations or simplified statistical models derived from traditional FSO systems, which do not fully capture the spatial and modal characteristics unique to OAM beams. 
Pointing errors, in particular, remain one of the most critical challenges in deploying OAM-based inter-satellite communication systems, as they inherently induce intermodal crosstalk, degrading both system performance and reliability. 
The effects of these errors are complex and strongly influenced by key physical parameters such as beam divergence, tracking accuracy, and receiver aperture size. These limitations underline the need for an analytical framework that can accurately model such impairments and enable fast, online optimization of system parameters in response to the rapidly changing link conditions of dynamic LEO constellations.

\subsection{Key Contributions}
To address the lack of efficient and accurate models for analyzing pointing-induced impairments in OAM-based inter-satellite communication, this paper presents an analytical framework that precisely captures the impact of pointing errors on intermodal crosstalk.
Building upon this foundation, we further derive analytical expressions for BER performance, enabling rapid and accurate system evaluation under realistic link conditions. Unlike prior approaches, the proposed method offers a practical alternative to simulation-heavy techniques, significantly reducing computational overhead while preserving accuracy. This capability is especially important for dynamic LEO constellations, where frequent topology changes demand real-time adaptation and fast optimization. Rather than claiming to fully solve the problem, this work provides a new analytical toolset that supports deeper understanding and more effective design of OAM-enabled spaceborne optical communication systems.

This study makes the following key contributions:
\begin{itemize}
	\item \textbf{Accurate Analytical Modeling of Crosstalk:}  
	An analytical expression is developed to characterize intermodal crosstalk caused by pointing errors in OAM-based links, incorporating key physical parameters such as beam divergence, receiver aperture size, and tracking precision. This model improves upon prior approaches that rely on conventional FSO assumptions or extensive numerical simulations.
	\item \textbf{Efficient Performance Evaluation and BER Analysis:}  
	Building upon the crosstalk model, analytical expressions for BER are derived, enabling fast and reliable performance prediction. Compared to traditional Monte Carlo methods, the proposed framework reduces computation time by more than 1200$\times$ while maintaining high accuracy.
	\item \textbf{Practical Design Guidelines for OAM Links:}  
	The analysis provides actionable insights into optimal mode selection, beam width adjustment, and link configuration under varying levels of pointing accuracy and link distance. In particular, the superiority of asymmetric mode sets is demonstrated for enhancing robustness in dynamic environments.
	\item \textbf{Applicability to LEO Constellations:}  
	The proposed framework is tailored to support fast adaptation in LEO satellite constellations, where link conditions and network topology change rapidly. It enables real-time system reconfiguration, addressing a critical need in high-mobility optical satellite networks.
\end{itemize}

\begin{figure}
	\begin{center}
		\includegraphics[width=2.5 in]{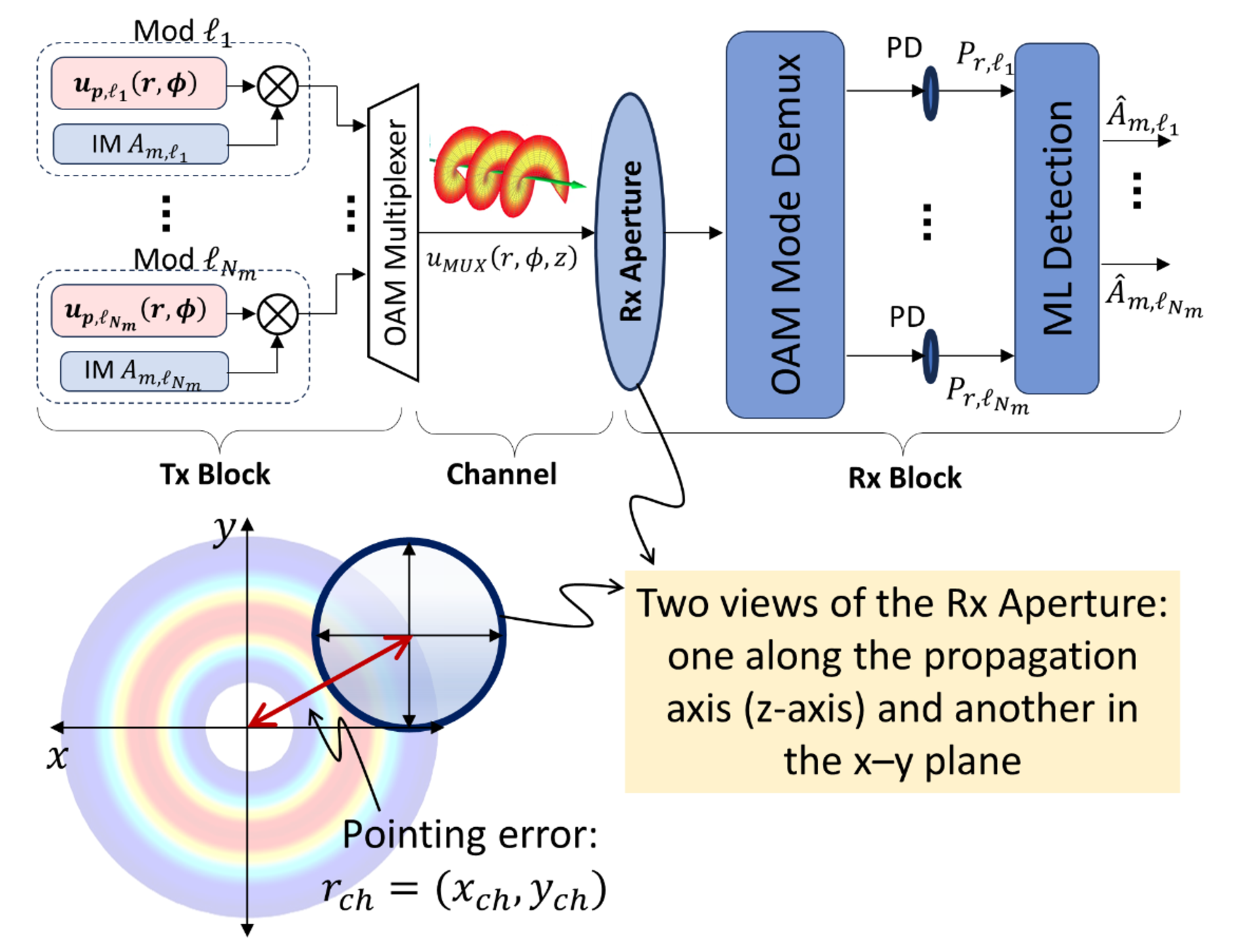}
		
		\caption{System model for an OAM-based FSO inter-satellite communication link. The transmitter employs an OAM multiplexer to combine multiple helical beams, which propagate through space and undergo pointing errors due to satellite motion and misalignment. At the receiver, the distorted beams are collected by a circular aperture, demultiplexed, and detected by an array of photodetectors (PDs). Two views of the receiver aperture are shown: one along the propagation axis (z-axis view) and another in the transverse plane (x-y view).}      
		\label{nv1}
	\end{center}
\end{figure}

\section{System Model}
As depicted in Fig. \ref{nv1}, the OAM system consists of multiple parallel channels, where each channel leverages Laguerre-Gaussian (LG) modes with unique OAM values. These modes enable the encoding of data onto separate orthogonal beams, facilitating high-capacity communication.
The characteristics of each mode are defined by the parameter $\ell$. At the transmitter, we utilize a set of orthogonal modes denoted as:
\begin{equation}
	L_m = \{\ell_1, \ell_2, \dots, \ell_{N_m}\}
\end{equation}
where $N_m$ represents the total number of modes or parallel channels, and $\ell_n$ is an integer that can take positive or negative values. For $(n, m) \in \{1, \dots, N_m\}$ with $i \neq j$, it holds that $\ell_n \neq \ell_m$. This ensures orthogonality among the modes. In fact, $\ell_n$ characterizes the unique helical phase structure of the beam, representing the OAM carried by the mode. This value uniquely defines the twist of the beam's phase around its central axis, allowing for the differentiation of parallel communication channels.

The electric field of each modulated mode in in cylindrical coordinates \( (r, \phi, z) \) is modeled as:
\begin{align}
	u_{\ell_n,m}(r, \phi, z) = \sqrt{A_{m,\ell_n}} \cdot u_{p,\ell_n}(r, \phi, z),  
\end{align}
where $A_{m,\ell_n}$ represents the independent modulation applied to mode $\ell_n$, and $u_{p,\ell_n}(r, \phi, z)$ denotes the electric field, which for a LG beam is modeled as \cite{allen1992orbital}:
\begin{align} \label{eq:LG_mode}
	& u_{\ell_n}(r, \phi, z) = \\
	&~~~\sqrt{\frac{2 p!}{\pi (p + |\ell_n|)!}} \frac{1}{w(z)} \left( \frac{\sqrt{2} r}{w(z)} \right)^{|\ell_n|} 
	L_p^{|\ell_n|} \left( \frac{2 r^2}{w(z)^2} \right)  \nonumber \\
	&\quad \times \exp\left(-\frac{r^2}{w(z)^2}\right) e^{-i \ell_n \phi} \exp\left( -i k_\nu \frac{r^2}{2 R(z)} + i  \psi(z) \right) \nonumber
\end{align}
where:
\begin{itemize}
	\item \( r \): Radial distance from the beam center.
	\item \( \phi \): Azimuthal angle in cylindrical coordinates.
	\item \( z \): Propagation distance from the transmitter.
	\item \( p \): Radial mode index (non-negative integer).
	\item \( \ell_n \): Azimuthal mode index (OAM state), which defines the orbital angular momentum of the beam.
	\item \( w(z) \): Beam radius at distance \( z \), given by
	\begin{align} \label{eq:beam_radius}
		w(z) = w_0 \sqrt{1 + \left( \frac{z}{z_R} \right)^2}
	\end{align}
	where \( w_0 \) is the beam waist radius at the focal point, and \( z_R \) is the Rayleigh range.
	\item \( L_p^{|\ell_n|} \): Associated Laguerre polynomial of order \( p \) and degree \( |\ell_n| \) which is modeled as \cite{yang2022bit}
	\begin{align}
		L_p^{|\ell_n|}(x) = \sum_{m=0}^p \frac{(-1)^m}{m!} \binom{p + |\ell_n|}{p - m} x^m
	\end{align}
	\item \( k_\nu \): Wave number, defined as \( k_\nu = \frac{2 \pi}{\lambda} \), where \( \lambda \) is the wavelength of the optical carrier.
	\item \( R(z) \): Radius of curvature of the wavefront at distance \( z \), given by
	\begin{align} \label{eq:radius_of_curvature}
		R(z) = z \left( 1 + \left( \frac{z_R}{z} \right)^2 \right)
	\end{align}
	\item \( \psi(z) \): Gouy phase shift, defined as
	\begin{align} \label{eq:gouy_phase}
		\psi(z) = (2p+|\ell_n|+1)\arctan\left(\frac{z}{z_R}\right)
	\end{align}
\end{itemize}
Finally, the transmitted field resulting from all modes is obtained by summing the contributions of each modulated mode:
\begin{align} \label{ss2}
	u_{\text{MUX}}(r, \phi, z) = \sum_{n=1}^{N_m} u_{\ell_n,m}(r, \phi, z) .
\end{align}

In inter-satellite communications, the transmitting satellite utilizes modern tracking systems to direct signals toward the receiving satellite. Let $\theta_t = (\theta_{tx}, \theta_{ty})$ represent the tracking system errors in the $x$ and $y$ directions, where $\theta_{tx} \sim \mathcal{N}(0, \sigma_\theta^2)$ and $\theta_{ty} \sim \mathcal{N}(0, \sigma_\theta^2)$ \cite{dabiri2019optimal}. Small deviations in the transmitter's tracking system, $\theta_t = (\theta_{tx}, \theta_{ty})$, result in the displacement of the beam center relative to the center of the receiver's aperture by \cite{dabiri2019tractable}:
\begin{align}
	\label{sd2}
	r_{ch} = \sqrt{x_{ch}^2 + y_{ch}^2}, ~~\text{where}~~
	\begin{cases}
		&\!
		\!\! x_{ch} = \theta_{tx} Z, \\
		&\!
		\!\! y_{ch} = \theta_{ty} Z,
	\end{cases}
\end{align}
and $Z$ is the link length.

At the receiver, the signal is collected by a circular aperture with radius $r_a$. Let $(r', \phi', z')$ and $(x', y', z')$ denote the cylindrical and Cartesian coordinate systems from the receiver's perspective, respectively, where the propagation axis is aligned such that $z = z'$. Following the procedures outlined in \cite{dabiri2024advancing}, the electric field of the OAM signal, misaligned by $r_{ch} = (x_{ch}, y_{ch})$, and received by the aperture is modeled in the Cartesian coordinate system as:
\begin{align} \label{fiel_prim1}
	u_{\text{ap}} &(x', y', z')   = \sum_{n=1}^{N_m} \sqrt{A_{m,\ell_n}} u_{\text{ap},\ell_n}(x', y', z'),
\end{align}
where
\begin{align} \label{fiel_prim}
	u_{\text{ap},\ell_n}&(x', y', z')   =  \sqrt{\frac{2 p!}{\pi (p + |\ell_n|)!}} \frac{1}{w(z')}
	e^{-i \ell_n \tan^{-1} \left( \frac{y' + y_{\text{ch}}}{x' + x_{\text{ch}}} \right)} \nonumber \\
	& \times \left( \frac{\sqrt{2} \sqrt{(x' + x_{\text{ch}})^2 + (y' + y_{\text{ch}})^2}}{w(z')} \right)^{|\ell_n|} \nonumber \\
	&\times L_p^{|\ell_n|} \left( \frac{2 ((x' + x_{\text{ch}})^2 + (y' + y_{\text{ch}})^2)}{w(z')^2} \right) \nonumber \\
	&\times \exp\left(-\frac{(x' + x_{\text{ch}})^2 + (y' + y_{\text{ch}})^2}{w(z')^2}\right) \nonumber \\
	&\times  \exp\left( -i k_\nu \frac{(x' + x_{\text{ch}})^2 + (y' + y_{\text{ch}})^2}{2 R(z')} + i  \psi(z') \right),
\end{align}
where $ -\sqrt{r_a^2-x'^2}<y'<\sqrt{r_a^2-x'^2}$, and $-r_z<x'<r_a$.
Given that \( x' = r' \cos(\phi') \) and \( y' = r' \sin(\phi') \), we can rewrite \eqref{fiel_prim} as \eqref{fiel_prim2}.
\begin{figure*}[!t]
	\normalsize
	\begin{align} \label{fiel_prim2}
		&u_{\text{ap},\ell_n}(r', \phi', z)  =  \sqrt{\frac{2 p!}{\pi (p + |\ell_n|)!}} \frac{1}{w(z')} \left( \frac{\sqrt{2} \sqrt{(r' \cos(\phi') + x_{\text{ch}})^2 + (r' \sin(\phi') + y_{\text{ch}})^2}}{w(z')} \right)^{|\ell_n|}  \\
		&\times L_p^{|\ell_n|} \left( \frac{2 \left( (r' \cos(\phi') + x_{\text{ch}})^2 + (r' \sin(\phi') + y_{\text{ch}})^2 \right)}{w(z')^2} \right) 
		\exp\left(-\frac{(r' \cos(\phi') + x_{\text{ch}})^2 + (r' \sin(\phi') + y_{\text{ch}})^2}{w(z')^2}\right) \nonumber \\
		&\times e^{-i \ell_n \tan^{-1} \left( \frac{r' \sin(\phi') + y_{\text{ch}}}{r' \cos(\phi') + x_{\text{ch}}} \right)} \exp\left( -i k_\nu \frac{(r' \cos(\phi') + x_{\text{ch}})^2 + (r' \sin(\phi') + y_{\text{ch}})^2}{2 R(z')} + i  \psi(z') \right),~~~~~\text{for}~ r'<r_a,~~0<\phi'<2\pi.  \nonumber 
	\end{align}
	\hrulefill
\end{figure*}    
The electric field at the aperture output, $u_{\text{ap}}(r', \phi', z)$, is divided into $N_m$ segments, each fed into the input of a phase filter. Each filter multiplies the input signal by $e^{i\phi' \ell_j'}$ and, through integration over $\phi'$. The electric field at the filter output is modeled as:
\begin{align} \label{f1}
	S_{\ell_j'}(r', z) = \sum_{n=1}^{N_m}\int_0^{2\pi} \frac{ \sqrt{A_{m,\ell_n}} u_{\text{ap},\ell_n}(r', \phi', z)}{\sqrt{2\pi}N_m} \, e^{i \ell_j' \phi'} \, d\phi',
\end{align}
where $j\in\{1,...,N_m\}$.
According to \eqref{f1}, under ideal conditions and without pointing errors, only the electrical field corresponding to the mode $\ell_n = \ell'_j$ can pass through the filter, while all other modes are completely filtered out. However, as shown in \eqref{fiel_prim2}, pointing errors distort the electric field, preventing the filter from fully rejecting modes $\ell_n \neq \ell'_j$. In this case, we have:
\begin{align} \label{f2}
	S_{\ell_j'}(r', z) = \sum_{n=1}^{N_m} \sqrt{A_{m,\ell_n}} S_{\ell_n,\ell_j'}(r', z),
\end{align}
where
\begin{align} \label{f3}
	S_{\ell_n,\ell_j'}(r', z) = \int_0^{2\pi} \frac{  u_{\text{ap},\ell_n}(r', \phi', z)}{\sqrt{2\pi}N_m} \, e^{i \ell_j' \phi'} \, d\phi'.
\end{align}

After filtering, each LG mode component $S_{\ell_j'}(r', z)$ is directed to an Avalanche Photodiode (APD), which converts the optical signal into an electrical current with a multiplication gain \( G \). The received power for each mode \( \ell_j' \) at the APD is obtained as:
\begin{align} \label{received_power}
	P_{r,\ell'} =   \sum_{n=1}^{N_m} A_{m,\ell_n} C_{\ell_n,\ell'_j},
\end{align}
where \( \eta \) is the photodetector’s responsivity, and $C_{\ell_n,\ell'_j}$ represents the cross-talk, modeled as:
\begin{align} \label{f4}
	C_{\ell_n,\ell'_j} = 2\pi \eta G \int_0^{r_a}  \left|   S_{\ell_n,\ell_j'}(r', z) \right|^2 \, r'  \, dr'. 
\end{align}

\section{Cross-Talk Analysis}
This section investigates the impact of pointing errors on cross-talk characteristics in OAM-based inter-satellite links. 

\subsection{Cross-Talk Modeling and Analysis}
Substituting \eqref{fiel_prim1}, \eqref{fiel_prim2}, and \eqref{f3} in \eqref{f4}, the final form of the cross-talk can be expressed as \eqref{cros2}. As observed in \eqref{cros2}, the cross-talk is a highly complex and nonlinear function of various channel parameters, such as the pointing error \( r_{ch} = (x_{ch}, y_{ch}) \), beam width, mode index \( \ell_n \), link length, and phase. In this paper, we demonstrate that even small variations in \( r_{ch} = (x_{ch}, y_{ch}) \) result in significant changes in cross-talk.
\begin{figure*}[!t]
	\normalsize
	\begin{align} \label{cros2}
		&C_{\ell_n,\ell'_j}  =  \frac{  \eta G }{N_m^2 w^2(z')}  \int_0^{r_a}  \Bigg| 
		\int_0^{2\pi}   
		\sqrt{\frac{2 p!}{\pi (p + |\ell_n|)!}}  \left( \frac{\sqrt{2} \sqrt{(r' \cos(\phi') + x_{\text{ch}})^2 + (r' \sin(\phi') + y_{\text{ch}})^2}}{w(z')} \right)^{|\ell_n|}  \\
		&\times L_p^{|\ell_n|} \left( \frac{2 \left( (r' \cos(\phi') + x_{\text{ch}})^2 + (r' \sin(\phi') + y_{\text{ch}})^2 \right)}{w(z')^2} \right) 
		\exp\left(-\frac{(r' \cos(\phi') + x_{\text{ch}})^2 + (r' \sin(\phi') + y_{\text{ch}})^2}{w(z')^2}\right) \nonumber \\
		&\times \exp\left({-i \ell_n \tan^{-1} \left( \frac{r' \sin(\phi') + y_{\text{ch}}}{r' \cos(\phi') + x_{\text{ch}}} \right)
			+ i \ell_j' \phi'} \right) 
		\exp\left( -i k_\nu \frac{(r' \cos(\phi') + x_{\text{ch}})^2 + (r' \sin(\phi') + y_{\text{ch}})^2}{2 R(z')} + i  \psi(z') \right)
		\, d\phi'
		\Bigg|^2\, r'  \, dr'  \nonumber 
	\end{align}
	\hrulefill
\end{figure*}

However, similar to most works in the literature, the evaluation of cross-talk involves solving a two-dimensional numerical integral. This requirement significantly increases the computational complexity, particularly in the design and optimization of OAM-based systems. 
For instance, to perform detection using a simulation-based approach in \cite{dabiri2024advancing}, it is necessary to generate a large number of random bits and pointing error coefficients. For each coefficient, the two-dimensional integral in \eqref{cros2} must be solved \( N_m^2 \) times, corresponding to \( N_m \) different values of \( \ell_n \) and \( N_m \) different values of \( \ell'_j \).
Given this challenge, the main goal of this section is to derive less computationally intensive yet accurate formulations to facilitate the analysis and optimal design of OAM-based systems.
In the following, several propositions are presented to derive less computationally intensive expressions for evaluating \eqref{cros2}.

\begin{proposition}
	\( C_{\ell_n,\ell'_j} \) can be calculated with lower computational complexity compared to \eqref{cros2}, using only a one-dimensional integral over \( \phi' \) as follows:
\begin{align} \label{cros4}
	&C_{\ell_n,\ell'_j}  \simeq  \frac{A_1 r_a}{K_r}  \sum_{k=1}^{K_r}  \Bigg| 
	\int_0^{2\pi}   
	\left(  \sqrt{ 2 A_2\left(r_a k/K_r\right) } \right)^{|\ell_n|}   \nonumber \\
	&\times L_p^{|\ell_n|} \left( 2 A_2\left(\frac{r_a k}{K_r}\right) \right) 
	e^{-A_2\left(\frac{r_a k}{K_r}\right)}  
	e^{-i \ell_n \tan^{-1} \left( A_3\left(\frac{r_a k}{K_r}\right) \right)
		+ i \ell_j' \phi'}\nonumber \\
	&\times   
	e^{ -i k_\nu A_4\left(\frac{r_a k}{K_r}\right) + i  \psi(z') } 
	\, d\phi'
	\Bigg|^2   ,
\end{align}
where
\begin{align}
	\begin{cases}
		&\!\!\! A_1 = \frac{  \eta G }{N_m^2 w^2(z')}  \frac{2 p!}{\pi (p + |\ell_n|)!},\\
		&\!\!\! A_2(r'_k) = \frac{(r'_k \cos(\phi') + x_{\text{ch}})^2 + (r'_k \sin(\phi') + y_{\text{ch}})^2}{w^2(z')} \\
		&\!\!\! A_3(r'_k) = \frac{r'_k \sin(\phi') + y_{\text{ch}}}{r'_k \cos(\phi') + x_{\text{ch}}} \\
		&\!\!\! A_4(r'_k) = \frac{(r'_k \cos(\phi') + x_{\text{ch}})^2 + (r'_k \sin(\phi') + y_{\text{ch}})^2}{2 R(z')}
	\end{cases}
\end{align}
and $ r'_k = \left(\frac{r_a k}{K_r}\right) $.
\end{proposition}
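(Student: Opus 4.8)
The plan is to read \eqref{cros2} as a nested integral---an outer radial integration over $r'\in[0,r_a]$ of the squared modulus of an inner azimuthal integration over $\phi'\in[0,2\pi]$---and to replace only the outer integration by a finite quadrature sum, while leaving the inner one-dimensional $\phi'$-integral untouched. Reducing the genuine two-dimensional integral to a family of one-dimensional integrals is then immediate once the radial variable is discretized.

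I would begin with bookkeeping. The normalization constant $\sqrt{2p!/[\pi(p+|\ell_n|)!]}$ does not depend on $\phi'$, so it factors out of the inner integral; taking the squared modulus and absorbing it together with the global prefactor $\eta G/[N_m^2 w^2(z')]$ from \eqref{f4} produces exactly the constant $A_1$. The three composite expressions that recur throughout the integrand---the normalized squared displacement, the misalignment ratio inside $\tan^{-1}(\cdot)$, and the curvature-phase argument---are then renamed $A_2$, $A_3$, and $A_4$. This is purely a change of notation that rewrites the integrand of \eqref{cros2} in the compact form displayed inside \eqref{cros4}.

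The substantive step is the radial quadrature. Writing the simplified cross-talk as $C_{\ell_n,\ell'_j}=A_1\int_0^{r_a}|J(r')|^2\,r'\,dr'$, where $J(r')$ abbreviates the inner $\phi'$-integral, I would partition $[0,r_a]$ into $K_r$ equal subintervals of width $r_a/K_r$ and apply a rectangle/Riemann rule on the uniform grid $r'_k=r_a k/K_r$, $k=1,\dots,K_r$. The radial Jacobian $r'\,dr'$ is carried into the weights, so the outer integration collapses to the finite sum appearing in \eqref{cros4}, leaving a single one-dimensional azimuthal integral to be evaluated at each node $r'_k$.

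The main obstacle is certifying that this quadrature is accurate for a moderate $K_r$, since the entire practical payoff---reducing cost relative to a full two-dimensional evaluation of \eqref{cros2}---hinges on that. I would establish that the radial integrand is continuous and bounded on the compact interval $[0,r_a]$: the power $(\cdot)^{|\ell_n|}$ and the associated Laguerre polynomial $L_p^{|\ell_n|}$ are smooth, the Gaussian envelope $e^{-A_2}$ is bounded, and every phase factor (including $e^{-i\ell_n\tan^{-1}(A_3)}$, which stays finite even where the denominator of $A_3$ vanishes) has unit modulus, so $|J(r')|^2 r'$ has bounded variation. Standard Riemann-sum estimates then yield an $O(1/K_r)$ error, and I would fix $K_r$ by numerically checking convergence against \eqref{cros2}, confirming that high fidelity is attained at a small fraction of the two-dimensional cost.
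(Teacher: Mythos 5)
Your proposal is correct and follows the same basic route as the paper's Appendix~\ref{AppA}: factor the constants into $A_1$, rename the recurring expressions $A_2$, $A_3$, $A_4$, leave the azimuthal integral intact, and collapse the radial integral onto the uniform grid $r'_k = r_a k/K_r$. The difference is in how the coarse radial discretization is justified. You invoke a generic argument --- continuity and bounded variation of the radial integrand on the compact interval $[0,r_a]$ --- yielding an $O(1/K_r)$ Riemann-sum error with an unquantified constant, and you defer the choice of $K_r$ to numerical convergence checks. The paper instead argues from the physics of the inter-satellite regime: with link lengths of hundreds to thousands of kilometers, the received beam width $w(z')$ exceeds several meters and the displacement $r_{\text{ch}}$ is typically meters, while $r' \le r_a$ is only centimeters; hence $A_2(r')$, $A_3(r')$, $A_4(r')$ --- and thus the three factors $T_1$, $T_2$, $T_3$ of the integrand --- vary negligibly over all of $[0,r_a]$. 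That scale-separation argument is what actually licenses the claim, made immediately after the proposition, that $K_r \approx 6$ already reproduces the two-dimensional integral; your generic $O(1/K_r)$ bound by itself cannot explain why so few nodes suffice, which is the practical point of the result. One further nuance: carried out as you describe, with the Jacobian folded into the weights, your quadrature produces $\frac{A_1 r_a}{K_r}\sum_{k} r'_k \left| J(r'_k)\right|^2$, which is not literally the sum in \eqref{cros4}, whose displayed weight is $\frac{A_1 r_a}{K_r}$ with no $r'_k$ factor. Your version is the dimensionally consistent one (compare the weights $\frac{r_a^2 k}{K_r^2}$ appearing in \eqref{cros9}), so the discrepancy points to a dropped factor in the paper's displayed formula rather than to a flaw in your argument, but you should not assert that your sum coincides with \eqref{cros4} as written.
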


\begin{proof} Please refer to Appendix \ref{AppA}.
\end{proof}

\eqref{cros4} in Proposition 1 replaces the two-dimensional integral with \( K_r \) evaluations of a one-dimensional integral for calculating \( C_{\ell_n,\ell'_j} \). As demonstrated in the simulation section, using only \( K_r = 6 \) yields results very close to the exact solution obtained from the two-dimensional integral, while reducing the computational complexity and runtime by more than 10 times.

\begin{proposition}
	\( C_{\ell_n,\ell'_j} \) can be calculated with lower computational complexity compared to \eqref{cros2}, using only a one-dimensional integral over \( r' \) as follows:
	\begin{align} \label{cros8}
		&C_{\ell_n,\ell'_j}  \simeq  A_1 A_5  
		\int_0^{r_a}   r'  \left( J_{\ell_j'}\left( \frac{k  r_{\text{ch}}}{R(z')} r' \right)   \right)^2     \, dr'  
	\end{align}
	where 
	\begin{align}
		A_5 = \left[ 2\pi \left( \frac{\sqrt{2} r_{\text{ch}} }{w(z')} \right)^{|\ell_n|}  
		L_p^{|\ell_n|} \left( \frac{2 r^2_{\text{ch}} }{w(z')^2} \right) 
		\exp\left(-\frac{r_{\text{ch}}^2}{w(z')^2}\right)  \right]^2. \nonumber
	\end{align}
\end{proposition}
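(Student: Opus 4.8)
The plan is to collapse the inner azimuthal integral in \eqref{cros2} into closed form by separating a slowly varying amplitude from a rapidly oscillating wavefront phase, and then identifying the surviving $\phi'$-integral with the integral representation of a Bessel function. First I would expand the curvature-phase argument as
\[
(r'\cos\phi'+x_{\text{ch}})^2+(r'\sin\phi'+y_{\text{ch}})^2 = r'^2+r_{\text{ch}}^2+2r'\left(x_{\text{ch}}\cos\phi'+y_{\text{ch}}\sin\phi'\right),
\]
so that the factor $\exp(-i k_\nu[\cdot]/2R(z'))$ splits into a $\phi'$-independent piece $\exp(-ik_\nu(r'^2+r_{\text{ch}}^2)/2R(z'))$ and a cross term. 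Writing the displacement in polar form as $x_{\text{ch}}=r_{\text{ch}}\cos\phi_{\text{ch}}$ and $y_{\text{ch}}=r_{\text{ch}}\sin\phi_{\text{ch}}$, the cross term reduces to $\exp(-i\tfrac{k_\nu r_{\text{ch}} r'}{R(z')}\cos(\phi'-\phi_{\text{ch}}))$, which becomes the only $\phi'$-dependent phase retained after $|\cdot|^2$ is taken.

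Second, I would invoke the central approximation: over the aperture the slowly varying envelope factors — the power $(\cdot)^{|\ell_n|}$, the associated Laguerre polynomial, the Gaussian $\exp(-\cdot/w^2(z'))$, and the azimuthal phase $\exp(-i\ell_n\tan^{-1}(\cdot))$ — are frozen at the beam-center value, i.e.\ the radial argument is replaced by $r_{\text{ch}}$ and the angle by $\phi_{\text{ch}}$. This is legitimate in the regime $r_a\ll r_{\text{ch}}$, where these envelopes vary negligibly across $r'\in[0,r_a]$, whereas the curvature phase, amplified by the very large wavenumber $k_\nu$, must be kept exactly. The frozen factors, together with $\exp(i\psi(z'))$ and the $\phi'$-independent curvature phase, then factor out of the $\phi'$-integral and, after squaring, assemble precisely into the amplitude $A_5$.

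Third, the residual azimuthal integral
\[
\int_0^{2\pi} e^{i\ell'_j\phi'}\exp\!\left(-i\tfrac{k_\nu r_{\text{ch}} r'}{R(z')}\cos(\phi'-\phi_{\text{ch}})\right)d\phi'
\]
is evaluated through the Jacobi--Anger expansion $e^{-ia\cos\theta}=\sum_m(-i)^m J_m(a)e^{im\theta}$, which gives $2\pi(-i)^{\ell'_j}e^{i\ell'_j\phi_{\text{ch}}}J_{\ell'_j}(\tfrac{k_\nu r_{\text{ch}} r'}{R(z')})$. Taking the squared modulus annihilates every unit-modulus phase, leaving $\big(2\pi J_{\ell'_j}(\tfrac{k_\nu r_{\text{ch}}}{R(z')}r')\big)^2$ inside the radial integral; collecting the remaining constants into $A_1$ and $A_5$ (with $k\equiv k_\nu$) then reproduces \eqref{cros8}.

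The hard part will be rigorously justifying the freezing of the envelope at $r_{\text{ch}}$ while simultaneously retaining the curvature phase to all orders — a separation-of-scales argument whose validity hinges on $r_a\ll r_{\text{ch}}$ and on the effective Bessel argument $k_\nu r_{\text{ch}} r_a/R(z')$ remaining moderate, so that the neglected variation of the Laguerre/Gaussian envelope across the aperture stays small relative to the oscillation captured by the Bessel kernel. Quantifying and bounding this truncation error, rather than the algebra, is where I would expect the real work to lie.
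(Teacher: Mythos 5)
Your proposal is correct and follows essentially the same route as the paper's Appendix~\ref{AppB}: freeze the slowly varying Laguerre--Gaussian envelope and the OAM azimuthal phase at the displacement point (justified by $r_{\text{ch}} \gg r_a$ for inter-satellite links), retain the curvature phase exactly because $k_\nu$ is large, and collapse the azimuthal integral via a Jacobi--Anger expansion plus orthogonality of $e^{in\phi'}$ into a single Bessel term, whose unit-modulus prefactors vanish under $|\cdot|^2$. The only difference is cosmetic: the paper first rotates coordinates so that the displacement lies on the $y''$-axis (using the $\sin\phi''$ form of the expansion, \cite[Eq.~(8.511.4)]{gradshteyn2007table}), whereas you keep a general displacement angle $\phi_{\text{ch}}$ and use the $\cos$ form of Jacobi--Anger --- both yield identically \eqref{cros8}.
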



\begin{proof} Please refer to Appendix \ref{AppB}.

\end{proof}

It should be noted that \eqref{cros8}, presented in Proposition 2, has a lower computational burden compared to the expression in Proposition 1. Simulation results also demonstrate that it provides high accuracy for a wide range of \( r_{\text{ch}} \) values. Only slight deviations are observed for small \( r_{\text{ch}} \) values, which are highly unlikely in practice for inter-satellite communications due to the long link distances. Furthermore, for such small \( r_{\text{ch}} \) values, the system performance remains very robust.

\begin{proposition}
	The closed-form expression for \( C_{\ell_n,\ell'_j} \) is derived as follows:
	\begin{align} \label{cros9}
		&C_{\ell_n,\ell'_j}  \simeq  A_1 A_5  \frac{r_a^2 }{K_r^2}
		\sum_{k=1}^{K_r}    k \left( J_{\ell_j'}\left( \frac{k_\nu  r_{\text{ch}} r_a}{R(z') K_r} k \right)   \right)^2 
	\end{align}
	
\end{proposition}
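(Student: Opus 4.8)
The plan is to obtain \eqref{cros9} directly from the one-dimensional integral \eqref{cros8} established in Proposition 2 by replacing the continuous integration over $r'$ with a uniform rectangular (right-endpoint Riemann) quadrature. Proposition 2 has already reduced the crosstalk to
\[
C_{\ell_n,\ell'_j} \simeq A_1 A_5 \int_0^{r_a} r' \left( J_{\ell_j'}\left( \frac{k_\nu r_{\text{ch}}}{R(z')} r' \right) \right)^2 dr',
\]
so the only remaining task is the numerical evaluation of a single definite integral of a smooth, nonnegative integrand on the finite interval $[0, r_a]$. No new physics enters; Proposition 3 is a discretized restatement of Proposition 2.

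First I would partition $[0, r_a]$ into $K_r$ equal subintervals of width $\Delta r' = r_a / K_r$ and take the right endpoints $r'_k = r_a k / K_r$, for $k = 1, \dots, K_r$, as sample points. Writing the integrand as $g(r') = r' \left( J_{\ell_j'}\left( \frac{k_\nu r_{\text{ch}}}{R(z')} r' \right) \right)^2$, the right-endpoint Riemann sum gives $\int_0^{r_a} g(r')\, dr' \approx \sum_{k=1}^{K_r} g(r'_k)\, \Delta r'$. Substituting $r'_k$ and $\Delta r'$ produces $\sum_{k=1}^{K_r} \frac{r_a k}{K_r} \left( J_{\ell_j'}\left( \frac{k_\nu r_{\text{ch}} r_a}{R(z') K_r} k \right) \right)^2 \frac{r_a}{K_r}$, and factoring the common constant $r_a^2 / K_r^2$ out of the sum reproduces exactly the bracketed expression in \eqref{cros9}. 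Restoring the prefactor $A_1 A_5$ then completes the identity, so the algebraic part is essentially immediate.

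The substantive content lies in justifying that a small $K_r$ suffices. For $\ell_j' \neq 0$ the integrand vanishes like $r'^{\,1+2|\ell_j'|}$ near the origin and is infinitely differentiable on $[0, r_a]$, so the right-endpoint rule has error $O(1/K_r)$ that decays once the integrand is resolved. The key observation—which I would make precise by bounding the first derivative or total variation of $g$ in terms of the dimensionless quantity $k_\nu r_{\text{ch}} r_a / R(z')$—is that over the small aperture radius $r_a$ the Bessel argument spans only a limited range, so $g$ is slowly varying and essentially monotone on $[0, r_a]$. Consequently even a coarse sampling with a modest $K_r$ (comparable to the $K_r = 6$ already adopted in Proposition 1) captures the integral with negligible error.

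Thus the main obstacle is not the derivation but the error argument: establishing that, under the physically relevant inter-satellite parameters, the oscillation of $J_{\ell_j'}$ across the aperture remains mild enough for the rectangular rule to converge rapidly. I would close this gap by combining the smoothness bound above with the empirical convergence reported against the exact two-dimensional integral \eqref{cros2} in the simulation section, which confirms the adequacy of the chosen $K_r$ and hence validates the closed-form approximation \eqref{cros9}.
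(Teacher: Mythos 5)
Your proposal is correct and matches the paper's own proof: Appendix C likewise starts from \eqref{cros8}, partitions $[0,r_a]$ into $K_r$ equal subintervals, and treats the integrand $r' \bigl( J_{\ell_j'}\bigl( \tfrac{k_\nu r_{\text{ch}}}{R(z')} r' \bigr) \bigr)^2$ as approximately constant on each one, which is exactly your right-endpoint Riemann sum. Your added remarks on the $O(1/K_r)$ error and the slow variation of the Bessel argument across the small aperture simply make explicit the ``minimal variation'' justification the paper borrows from Appendix A.
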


\begin{proof} Please refer to Appendix \ref{AppC}.
	
\end{proof}

For \eqref{cros9} in Proposition 3, it can be shown that for a limited number of \( K_r \) (between 5 and 8), \( C_{\ell_n,\ell'_j} \) can be calculated with high accuracy compared to the two-dimensional integral form in \eqref{cros2}. Moreover, \eqref{cros9} has a simpler and more tractable form, reducing computational complexity by more than 100 times compared to \eqref{cros2}. This significant reduction enables faster analysis and optimal design of OAM-based inter-satellite communication systems.

Based on Proposition 3, several key insights and properties related to \( C_{\ell_n,\ell'_j} \) for inter-satellite communications under pointing errors can be highlighted, which can lead to a better understanding and, consequently, the optimal design of such systems.

\begin{proposition}
	For large values of $r_{\text{ch}}$, the asymptotic expression for \( C_{\ell_n,\ell'_j} \) is derived as follows:
	\begin{align} \label{cros10}
		&C_{\ell_n,\ell'_j}  \simeq  A_1 A_5   
		       \frac{R(z') r_a }{\pi k_\nu  r_{\text{ch}} }    
	\end{align}	
\end{proposition}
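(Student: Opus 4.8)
The plan is to start from the one-dimensional integral representation established in Proposition~2, namely \eqref{cros8}, and to evaluate the remaining integral
\[
\int_0^{r_a} r' \left( J_{\ell_j'}\!\left( \frac{k_\nu r_{\text{ch}}}{R(z')}\, r' \right) \right)^2 dr'
\]
in the regime where the argument of the Bessel function is large. Writing $\alpha = k_\nu r_{\text{ch}} / R(z')$, the condition that $r_{\text{ch}}$ be large is precisely the statement that $\alpha$ is large, so that $\alpha r'$ sweeps through many periods of the Bessel oscillation as $r'$ ranges over $[0, r_a]$. Everything else in \eqref{cros8}, collected in the prefactor $A_1 A_5$, is inert under this limit and simply rides along.

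First I would replace $J_{\ell_j'}(\alpha r')$ by its large-argument asymptotic form
\[
J_{\ell_j'}(x) \simeq \sqrt{\tfrac{2}{\pi x}}\, \cos\!\left( x - \tfrac{\ell_j' \pi}{2} - \tfrac{\pi}{4} \right),
\]
valid for $x \gg 1$. Squaring this and multiplying by $r'$, the explicit factor of $r'$ cancels against the $1/(\alpha r')$ coming from the squared amplitude, so the integrand collapses to a constant amplitude $\tfrac{2}{\pi\alpha}$ multiplied by $\cos^2\!\big(\alpha r' - \tfrac{\ell_j'\pi}{2} - \tfrac{\pi}{4}\big)$. This cancellation is the structural heart of the argument: it is exactly what produces the clean $1/r_{\text{ch}}$ scaling announced in \eqref{cros10}, and it is worth flagging explicitly in the write-up.

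Next I would use that, for large $\alpha$, the cosine-squared factor oscillates rapidly over the interval of integration and may be replaced by its mean value $\tfrac{1}{2}$; equivalently one writes $\cos^2\theta = \tfrac{1}{2} + \tfrac{1}{2}\cos(2\theta)$ and notes that the integral of the oscillatory remainder is $O(1/\alpha)$ smaller than the retained term. The leading contribution is then simply $\int_0^{r_a} \tfrac{2}{\pi\alpha}\cdot\tfrac12\, dr' = r_a/(\pi\alpha)$. Substituting back $\alpha = k_\nu r_{\text{ch}} / R(z')$ and restoring $A_1 A_5$ yields $A_1 A_5\, R(z') r_a / (\pi k_\nu r_{\text{ch}})$, which is exactly \eqref{cros10}.

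The main obstacle I anticipate is rigorously controlling the two places where the asymptotic substitution is not literally valid: the neighborhood of $r'=0$, where $\alpha r'$ is small and the large-argument expansion fails, and the precise bookkeeping of the averaging error. Both are benign in the large-$r_{\text{ch}}$ limit, since the small-$r'$ window has length $O(1/\alpha)$ and contributes at the same suppressed order as the oscillatory remainder, so the combined error is $o(1/r_{\text{ch}})$ relative to the leading term. If a fully rigorous bound were desired, I would instead invoke the exact Lommel-type evaluation $\int_0^{r_a} r'\, J_{\ell_j'}(\alpha r')^2\,dr' = \tfrac{r_a^2}{2}\big[J_{\ell_j'}(\alpha r_a)^2 - J_{\ell_j'-1}(\alpha r_a)\,J_{\ell_j'+1}(\alpha r_a)\big]$ and apply the Bessel asymptotics directly to the boundary term; a short trigonometric identity then shows the bracketed quantity tends to $\tfrac{2}{\pi\alpha r_a}$, recovering $r_a/(\pi\alpha)$ while making the error fully explicit.
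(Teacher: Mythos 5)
Your proposal is correct and follows essentially the same route as the paper's Appendix D: both proofs hinge on the large-argument asymptotic $J_{\ell_j'}(x)\simeq\sqrt{2/(\pi x)}\,\cos\left(x-\tfrac{\ell_j'\pi}{2}-\tfrac{\pi}{4}\right)$, the cancellation of the radial weight factor against the $1/x$ decay of $J_{\ell_j'}^2$, and discarding the residual oscillation, the only difference being that you start from the integral of Proposition~2 while the paper substitutes into the discretized sum of Proposition~3 --- an equivalent computation. Your handling of the dropped oscillatory term (mean-value averaging with an $O(1/\alpha)$ error bound, plus the exact Lommel-integral fallback) is somewhat more careful than the paper's pointwise replacement $1+(-1)^{\ell_j'}\sin(2x_k)\simeq 1$, but the mathematical substance is identical.
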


\begin{proof} Please refer to Appendix \ref{AppD}.
	
\end{proof}

Proposition 4 effectively characterizes the behavior of an OAM-based system under severe pointing errors. From Proposition 4, the following key result can be derived.

\textbf{Remark 1:} For large values of \(r_{\text{ch}}\), the crosstalk function \(C_{\ell_n, \ell'_j}\) becomes independent of the OAM mode indices \(\ell'_j\). 

OAM modes carry two types of information: radial and phase-based. Remark 1 indicates that, for large values of \(r_{\text{ch}}\), the phase-based information is entirely lost. At the receiver, for different values of \(\ell'_j\), the crosstalk terms \(C_{\ell_n, \ell'_j}\) tend to converge to similar values. However, it should be noted that the parameters \(\ell_n\) and \(\ell'_j\) are distinct. The term \(A_5\) in \eqref{cros10} remains a function of \(\ell_n\), and its variation leads to different outputs for \(C_{\ell_n, \ell'_j}\). These variations are attributed to differences in the radial structure of the modes.

\textbf{Remark 2:} For inter-satellite communications, it can be observed with high accuracy that:
\begin{align}
C_{\ell_n, \ell'_j} \simeq C_{-\ell_n, \ell'_j}.
\end{align}

This property in Remark 2 is clearly observable from the relations in Propositions 3 and 4. It should be noted that this characteristic applies specifically to long-distance inter-satellite communications, where the beam width becomes significantly larger than the receiver aperture due to the extended link distance. This is in contrast to short-range terrestrial OAM communications, where the beam width is much smaller and comparable to the receiver aperture.


\subsection{BER Analysis}
In this subsection, the BER performance of the OAM-based inter-satellite link is analytically evaluated. The received signal corresponding to the $j$-th OAM mode, after phase filtering and optical-to-electrical conversion, can be modeled as
\begin{align} \label{eq:rx_signal}
	y_j = \sum_{n=1}^{N_m} \sqrt{A_{m,\ell_n}}\, \sqrt{C_{\ell_n,\ell'_j}} + n_j,
\end{align}
where $n_j$ denotes the additive white Gaussian noise (AWGN) term with distribution $n_j \sim \mathcal{CN}(0,N_0)$, and $C_{\ell_n,\ell'_j}$ is the crosstalk coefficient defined previously.
The receiver collects the outputs from all $N_m$ filters, forming the received signal vector:
\begin{align} \label{eq:rx_vector}
	\mathbf{y} = [y_1, y_2, \dots, y_{N_m}]^T.
\end{align}
Each element $y_j$ corresponds to the output of the $j$-th OAM mode after phase filtering and optical-to-electrical conversion.
Given the transmitted data vector $\mathbf{A}_m = [A_{m,\ell_1}, A_{m,\ell_2}, \dots, A_{m,\ell_{N_m}}]^T$, the maximum likelihood (ML) detection rule jointly estimates the transmitted symbols by solving:
\begin{align} \label{eq:ML_joint}
	\hat{\mathbf{A}}_m = \arg\min_{\mathbf{A}_m \in \mathcal{A}^{N_m}} \left\| \mathbf{y} - \mathbf{H}(r_{\text{ch}})\, \sqrt{\mathbf{A}_m} \right\|^2,
\end{align}
where $\mathcal{A}$ is the modulation alphabet (e.g., $\mathcal{A} = \{0,1\}$ for OOK), the square root is applied element-wise, and $\mathbf{H}(r_{\text{ch}})$ is the effective channel gain matrix conditioned on the instantaneous pointing error $r_{\text{ch}}$, defined as
\begin{align} \label{eq:H_matrix}
	\mathbf{H}(r_{\text{ch}}) = 
	\begin{bmatrix}
		\sqrt{C_{\ell_1,\ell'_1}} & \sqrt{C_{\ell_2,\ell'_1}} & \cdots & \sqrt{C_{\ell_{N_m},\ell'_1}} \\
		\sqrt{C_{\ell_1,\ell'_2}} & \sqrt{C_{\ell_2,\ell'_2}} & \cdots & \sqrt{C_{\ell_{N_m},\ell'_2}} \\
		\vdots & \vdots & \ddots & \vdots \\
		\sqrt{C_{\ell_1,\ell'_{N_m}}} & \sqrt{C_{\ell_2,\ell'_{N_m}}} & \cdots & \sqrt{C_{\ell_{N_m},\ell'_{N_m}}}
	\end{bmatrix}.
\end{align}
It is noted that each element $C_{\ell_n,\ell'_j}$ implicitly depends on the instantaneous pointing error $r_{\text{ch}}$ as characterized in \eqref{cros9}.

\subsection{Error Probability Analysis for $N_m=2$ and OOK}
The unconditional average probability of error is obtained by averaging \( P_{\text{e,avg}|r_{\text{ch}}} \) over the probability density function (PDF) of the pointing error displacement \( r_{\text{ch}} \), denoted by \( f_{r_{\text{ch}}}(r) \). Thus, the overall average BER is expressed as
\begin{align} \label{eq:ber_total_unconditional}
	P_{\text{e,avg}} = \int_0^{\infty} P_{\text{e,avg}|r_{\text{ch}}}\, f_{r_{\text{ch}}}(r_{\text{ch}}) \, dr_{\text{ch}}.
\end{align}
Due to the high sensitivity of OAM-based inter-satellite links to pointing errors and intermodal crosstalk, it is often preferable to restrict the number of simultaneously transmitted modes to $N_m=2$ in practical deployments. Higher values of $N_m$ can be considered only when the pointing error is sufficiently small, ensuring acceptable crosstalk levels.
Assuming $N_m=2$ and OOK modulation ($A_{m,\ell_j} \in \{0,1\}$), at each symbol interval, the transmitted vector can be one of the following four possible cases:
\begin{align}
	\{[0,0]^T, [1,0]^T, [0,1]^T, [1,1]^T\}.
\end{align} 
Given the four possible transmitted symbol vectors, the conditional bit error probability depends on the specific transmitted vector.
Assuming that each transmitted vector \(\mathbf{A}_m \in \{ [0,0]^T, [1,0]^T, [0,1]^T, [1,1]^T \}\) occurs with equal probability, the average probability of error conditioned on \(r_{\text{ch}}\) is given by
\begin{align} \label{eq:ber_total_expression}
	P_{\text{e,avg}|r_{\text{ch}}} = \frac{1}{4} \left( P_{\text{e},1|r_{\text{ch}}} + P_{\text{e},2|r_{\text{ch}}} + P_{\text{e},3|r_{\text{ch}}} + P_{\text{e},4|r_{\text{ch}}} \right),
\end{align}
where \(P_{\text{e},1|r_{\text{ch}}}\), \(P_{\text{e},2|r_{\text{ch}}}\), \(P_{\text{e},3|r_{\text{ch}}}\), and \(P_{\text{e},4|r_{\text{ch}}}\) are derived in the following.

\subsubsection{Case 1: Transmission of \([0,0]^T\)}
When the transmitted vector is \(\mathbf{A}_m = [0,0]^T\), the received signal consists purely of real additive Gaussian noise:
\begin{align} \label{eq:case1_received_real}
	\mathbf{y} = \mathbf{n},
\end{align}
where \(\mathbf{n} \sim \mathcal{N}(\mathbf{0},N_0\mathbf{I}_2)\) is a real-valued Gaussian noise vector.
Following the ML detection rule, an error occurs if the metric corresponding to any incorrect transmitted vector has a smaller value than the metric of the correct vector. Therefore, the probability of error is
\begin{align} \label{eq:case1_error_condition_real}
	P_{\text{e},1|r_{\text{ch}}} = \Pr\left( \min_{\mathbf{A}_m \neq [0,0]^T} \left\| \mathbf{y} - \mathbf{H}(r_{\text{ch}}) \sqrt{\mathbf{A}_m} \right\|^2 < \|\mathbf{y}\|^2 \right).
\end{align}
In particular, an error occurs if at least one of the following events happens:
\begin{itemize}
	\item The metric of \([1,0]^T\) is smaller than that of \([0,0]^T\).
	\item The metric of \([0,1]^T\) is smaller than that of \([0,0]^T\).
	\item The metric of \([1,1]^T\) is smaller than that of \([0,0]^T\).
\end{itemize}
Considering the comparison with \([1,0]^T\), the condition for an error can be simplified as
\begin{align} \label{eq:case1_metric_compare_real}
	2 \mathbf{h}_1^T \mathbf{n} > \|\mathbf{h}_1\|^2,
\end{align}
where \(\mathbf{h}_1 = \left[ \sqrt{C_{\ell_1,\ell'_1}}, \sqrt{C_{\ell_1,\ell'_2}} \right]^T\).
Since \(\mathbf{n}\) is a real Gaussian vector, the random variable \(2 \mathbf{h}_1^T \mathbf{n}\) follows a real Gaussian distribution with zero mean and variance \(4N_0 \|\mathbf{h}_1\|^2\).
Thus, the probability of erroneously favoring \([1,0]^T\) over \([0,0]^T\) is given by
\begin{align} \label{eq:case1_error_10_real}
	P_{\text{e},1\to2|r_{\text{ch}}} = Q\left( \sqrt{ \frac{ \|\mathbf{h}_1\|^2 }{4N_0} } \right).
\end{align}

Similarly, the error probability for incorrectly favoring \([0,1]^T\) is
\begin{align} \label{eq:case1_error_01_real}
	P_{\text{e},1\to3|r_{\text{ch}}} = Q\left( \sqrt{ \frac{ \|\mathbf{h}_2\|^2 }{4N_0} } \right),
\end{align}
where \(\mathbf{h}_2 = \left[ \sqrt{C_{\ell_2,\ell'_1}}, \sqrt{C_{\ell_2,\ell'_2}} \right]^T\).
For the case of incorrectly favoring \([1,1]^T\), the error probability is
\begin{align} \label{eq:case1_error_11_real}
	P_{\text{e},1\to4|r_{\text{ch}}} = Q\left( \sqrt{ \frac{ \|\mathbf{h}_1+\mathbf{h}_2\|^2 }{4N_0} } \right).
\end{align}
Therefore, the overall probability of error for Case 1, conditioned on \(r_{\text{ch}}\), can be obtained by using \eqref{eq:case1_error_10_real}, \eqref{eq:case1_error_01_real}, and \eqref{eq:case1_error_11_real}, and is explicitly given by
\begin{align} \label{eq:case1_total_error_real_final}
	&P_{\text{e},1|r_{\text{ch}}} =  \\
	&Q\left( \sqrt{ \frac{ \|\mathbf{h}_1\|^2 }{4N_0} } \right) + Q\left( \sqrt{ \frac{ \|\mathbf{h}_2\|^2 }{4N_0} } \right) + Q\left( \sqrt{ \frac{ \|\mathbf{h}_1+\mathbf{h}_2\|^2 }{4N_0} } \right), \nonumber
\end{align}
where \(\mathbf{h}_1 = \left[ \sqrt{C_{\ell_1,\ell'_1}}, \sqrt{C_{\ell_1,\ell'_2}} \right]^T\) and \(\mathbf{h}_2 = \left[ \sqrt{C_{\ell_2,\ell'_1}}, \sqrt{C_{\ell_2,\ell'_2}} \right]^T\).
\subsubsection{Cases 2--4: Transmission of \([1,0]^T\), \([0,1]^T\), and \([1,1]^T\)}
When the transmitted vector is either \(\mathbf{A}_m = [1,0]^T\), \(\mathbf{A}_m = [0,1]^T\), or \(\mathbf{A}_m = [1,1]^T\), the received signal can be expressed as
\begin{align}
	\mathbf{y} = 
	\begin{cases}
		\mathbf{h}_1 + \mathbf{n}, & \text{for } [1,0]^T, \\
		\mathbf{h}_2 + \mathbf{n}, & \text{for } [0,1]^T, \\
		\mathbf{h}_1 + \mathbf{h}_2 + \mathbf{n}, & \text{for } [1,1]^T.
	\end{cases}
\end{align}
Following the same ML detection procedure as in Case 1, the probability of error for Cases 2 and 3 (transmission of \([1,0]^T\) and \([0,1]^T\)) can be obtained as
\begin{align} \label{eq:case23_total_error_real_final}
	& P_{\text{e},2|r_{\text{ch}}} = P_{\text{e},3|r_{\text{ch}}} = \\
	&Q\left( \sqrt{ \frac{ \|\mathbf{h}_1\|^2 }{4N_0} } \right) + Q\left( \sqrt{ \frac{ \|\mathbf{h}_2\|^2 }{4N_0} } \right) + Q\left( \sqrt{ \frac{ \|\mathbf{h}_1-\mathbf{h}_2\|^2 }{4N_0} } \right). \nonumber
\end{align}

For Case 4 (transmission of \([1,1]^T\)), it can be observed that the error probability is identical to that of Case 1, thus
\begin{align} \label{eq:case4_total_error_real_final}
	P_{\text{e},4|r_{\text{ch}}} = P_{\text{e},1|r_{\text{ch}}},
\end{align}
where \(P_{\text{e},1|r_{\text{ch}}}\) is given in \eqref{eq:case1_total_error_real_final}.
%

Finally, by substituting the expressions from \eqref{eq:case1_total_error_real_final}--\eqref{eq:case4_total_error_real_final}, and noting that \(P_{\text{e},1|r_{\text{ch}}} = P_{\text{e},4|r_{\text{ch}}}\) and \(P_{\text{e},2|r_{\text{ch}}} = P_{\text{e},3|r_{\text{ch}}}\), the average probability of error simplifies to
\begin{align} \label{eq:ber_total_final_correct}
	&P_{\text{e,avg}|r_{\text{ch}}} = Q\left( \sqrt{ \frac{ \|\mathbf{h}_1\|^2 }{4N_0} } \right) + Q\left( \sqrt{ \frac{ \|\mathbf{h}_2\|^2 }{4N_0} } \right) \\
	&~~~+ \frac{1}{2} Q\left( \sqrt{ \frac{ \|\mathbf{h}_1+\mathbf{h}_2\|^2 }{4N_0} } \right) + \frac{1}{2} Q\left( \sqrt{ \frac{ \|\mathbf{h}_1-\mathbf{h}_2\|^2 }{4N_0} } \right). \nonumber
\end{align}

\begin{table} 
	\caption{Default parameter values used in the simulations.} 
	\centering 
	\begin{tabular}{c } 
		\includegraphics[width=3 in]{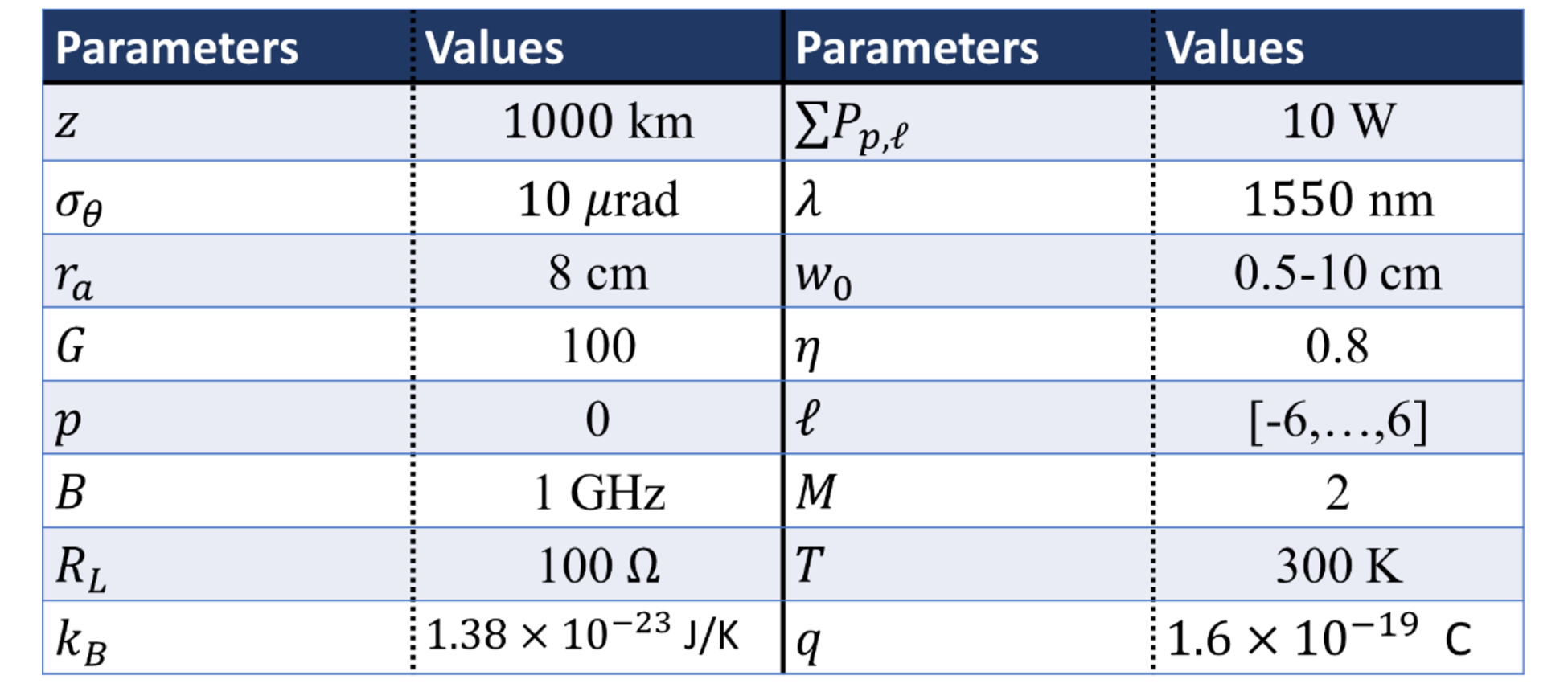}     
	\end{tabular}
	\label{Tab1} 
\end{table}

\begin{figure*}
	\centering
	\subfloat[] {\includegraphics[width=2.3 in, height=1.4 in]{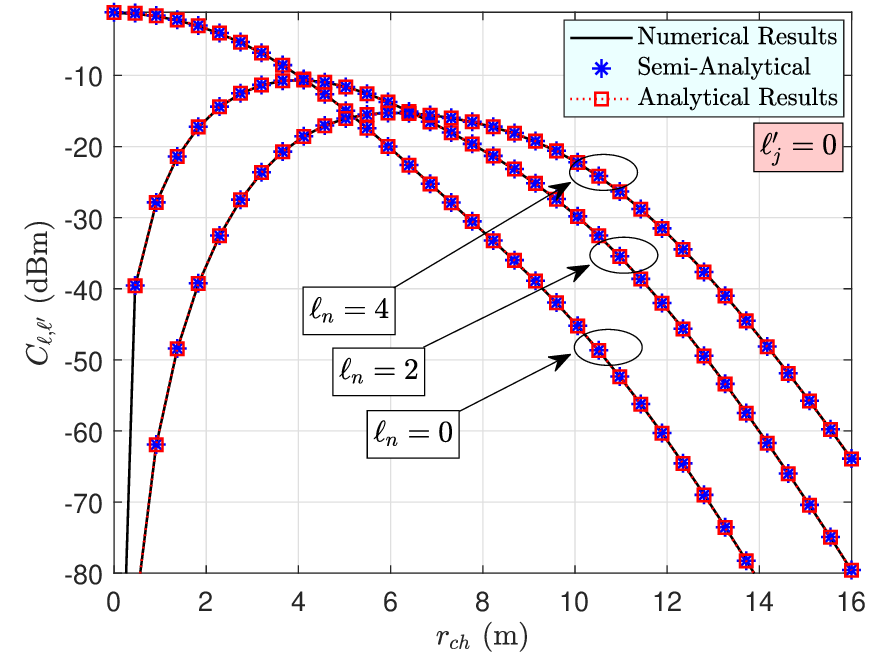}
		\label{cf1}
	}
	\hfill
	\subfloat[] {\includegraphics[width=2.3 in, height=1.4 in]{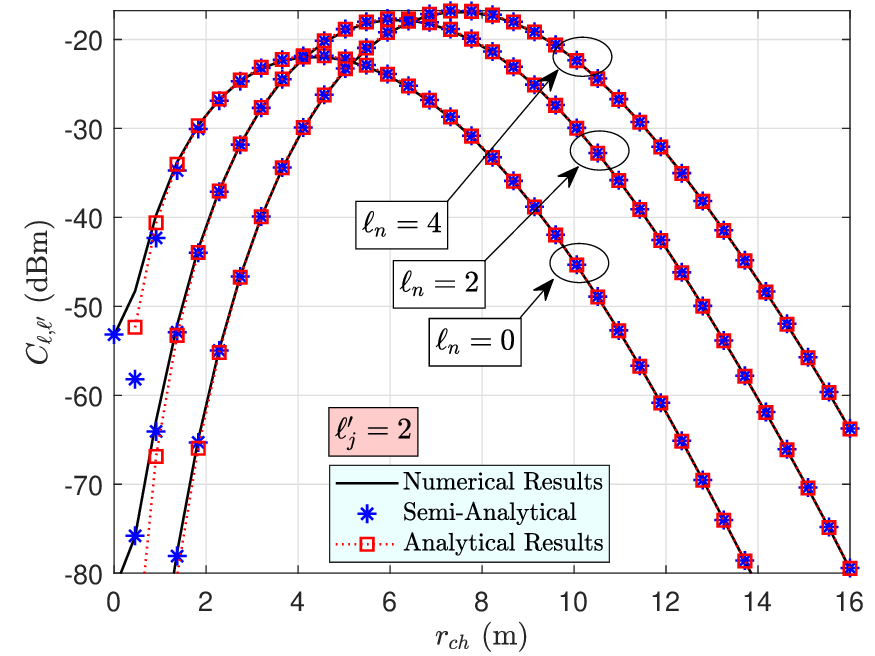}
		\label{cf2}
	}
	\hfill
	\subfloat[] {\includegraphics[width=2.3 in, height=1.4 in]{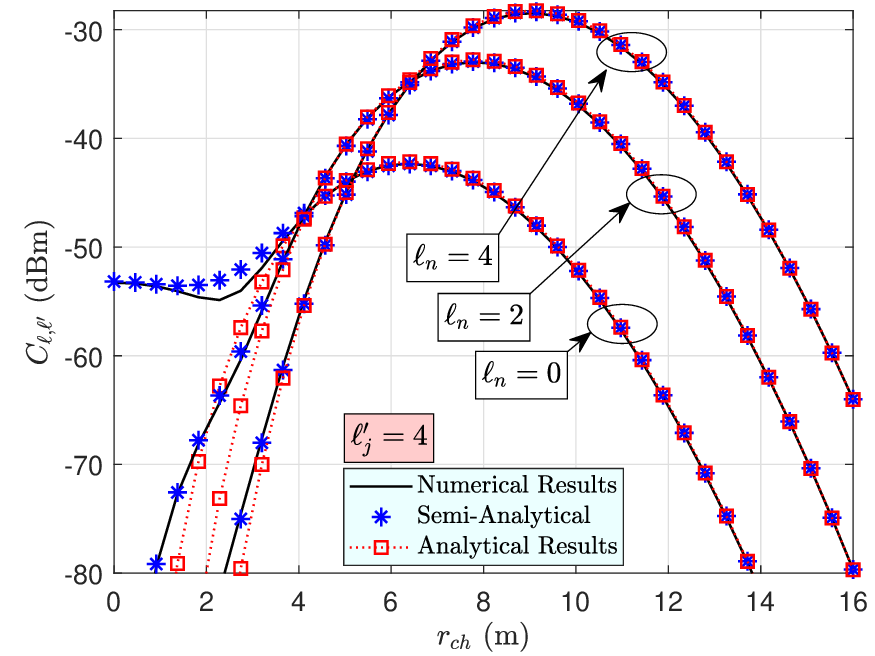}
		\label{cf3}
	}
	\caption{The crosstalk \( C_{\ell_n, \ell'_j} \) versus the pointing error \( r_{\text{ch}} \) for $p=0$, $w_0=5$ cm, and different values of \( (\ell_n, \ell'_j) \):  
		(a) \( \ell'_j = 0 \),  
		(b) \( \ell'_j = 2 \), and  
		(c) \( \ell'_j = 4 \).}
	\label{cf}
\end{figure*}

\begin{figure*}
	\centering
	\subfloat[] {\includegraphics[width=2.3 in, height=1.4 in]{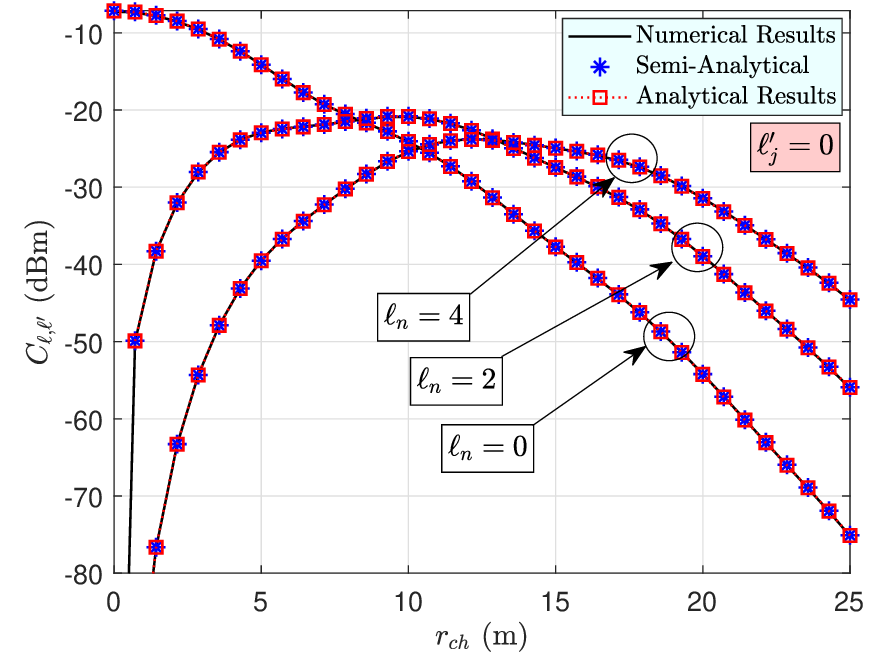}
		\label{cb1}
	}
	\hfill
	\subfloat[] {\includegraphics[width=2.3 in, height=1.4 in]{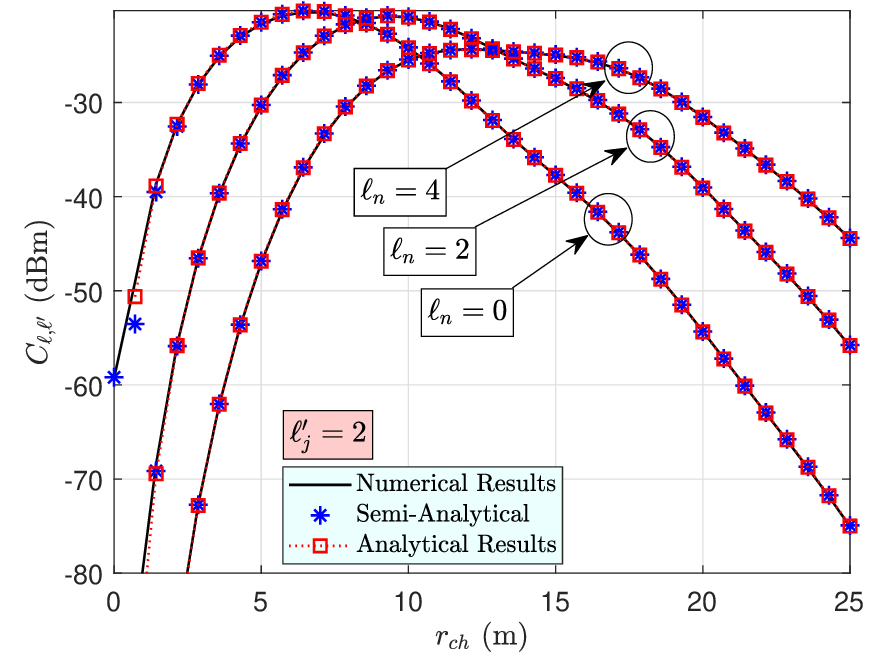}
		\label{cb2}
	}
	\hfill
	\subfloat[] {\includegraphics[width=2.3 in, height=1.4 in]{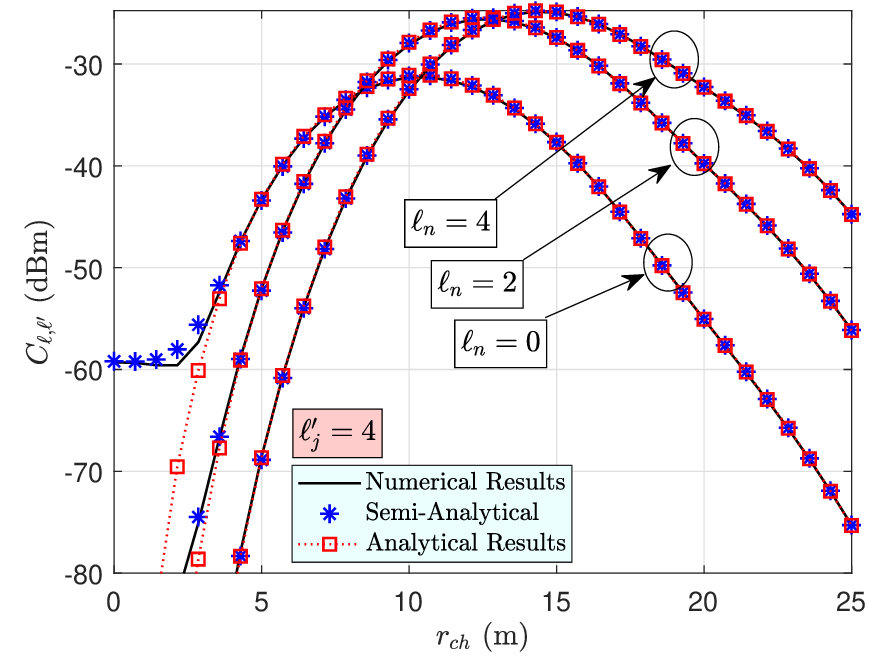}
		\label{cb3}
	}
	\caption{The crosstalk \( C_{\ell_n, \ell'_j} \) versus the pointing error \( r_{\text{ch}} \) for $p=0$, $w_0=2.5$ cm, and different values of \( (\ell_n, \ell'_j) \):  
		(a) \( \ell'_j = 0 \),  
		(b) \( \ell'_j = 2 \), and  
		(c) \( \ell'_j = 4 \).}
	\label{cb}
\end{figure*}

\section{Simulation Results}
In this paper, we provide a comprehensive analytical framework for the performance evaluation of OAM-based inter-satellite communication systems in the presence of pointing errors. Specifically, we first derive analytical expressions for the intermodal crosstalk coefficients \( C_{\ell_n, \ell'_j} \), which significantly accelerate the analysis compared to conventional Monte Carlo methods. Building on this foundation, we further develop a novel analytical expression for the average BER, enabling efficient and accurate assessment of system performance with minimal computational overhead.

To comprehensively evaluate the impact of pointing errors on OAM-based inter-satellite links, we conduct a two-stage simulation study. In the first stage, we analyze the behavior of intermodal crosstalk coefficients \( C_{\ell_n, \ell'_j} \) under various conditions to reveal key structural properties induced by pointing-induced misalignments. These insights play a central role in understanding the system's physical behavior and guiding the subsequent analysis. In the second stage, we leverage these results to evaluate the overall system performance using the proposed analytical BER expression, which is further validated through Monte Carlo simulations. The simulation setup and default parameters, representative of practical inter-satellite communication scenarios, are summarized in Table~\ref{Tab1}. These values are selected based on typical specifications reported in the optical wireless communication literature and the expected topology of LEO satellite constellations. To further examine the impact of various system parameters, any changes from these default values are clearly indicated in the corresponding figure captions.

\subsection{Crosstalk Analysis}
Initially, in Fig.~\ref{cf}, the crosstalk \( C_{\ell_n, \ell'_j} \) is plotted as a function of pointing error \( r_{\text{ch}} \) for for $p=0$, $w_0=5$ cm, and different mode combinations \( (\ell_n, \ell'_j) \in \{0, 2, 4\} \). Subfigures \ref{cf1}, \ref{cf2}, and \ref{cf3} correspond to the cases of \( \ell'_j = 0 \), \( \ell'_j = 2 \), and \( \ell'_j = 4 \), respectively.  
It should be noted that the legends in the figures refer to the following:
\begin{itemize} 
\item \textit{Numerical Results}: The exact two-dimensional solution for \( C_{\ell_n, \ell'_j} \) obtained using  \eqref{cros2},
\item \textit{Semi-Analytical}: The approximate one-dimensional integral solution derived in Proposition 1,  
\item \textit{Analytical}: The closed-form solution presented in Proposition 3.  
\end{itemize} 
As clearly shown in all three subfigures, for values of \( r_{\text{ch}} \) greater than 4 m, the results from Propositions 1 and 3 closely match the exact numerical results, accurately capturing the behavior of \( C_{\ell_n, \ell'_j} \). Only slight deviations can be observed for smaller \( r_{\text{ch}} \) values, particularly for the closed-form solution presented in Proposition 3.

One of the critical parameters influencing the performance of the OAM system under pointing errors is the beam width \( w(z) \), which is determined by the beam waist \( w_0 \) at the transmitter. To highlight the importance of \( w_0 \) and its impact on crosstalk, in Fig.~\ref{cb}, we repeat the results of Fig.~\ref{cf} for \( w_0 = 2.5 \, \text{cm} \).  
It should be noted that decreasing \( w_0 \) increases the beam width at the receiver, and therefore, we expect improved robustness to larger pointing errors. By comparing the results of Figs.~\ref{cf} and \ref{cb}, it can be observed that although the OAM system with \( w_0 = 5 \, \text{cm} \) performs better for smaller pointing errors, for larger pointing errors, the system with \( w_0 = 2.5 \, \text{cm} \) results in higher received power.  
For example, it can be seen that for \( w_0 = 2.5 \, \text{cm} \), the value \( C_{\ell_n, \ell'_j} \) for \( \ell_n = 4 \) and \( \ell'_j = 4 \) remains above \(-50 \, \text{dBm}\) even for \( r_{\text{ch}} = 25 \, \text{m} \), while for \( w_0 = 5 \, \text{cm} \), \( C_{\ell_n, \ell'_j} \) drops below \(-50 \, \text{dBm}\) at \( r_{\text{ch}} = 16 \, \text{m} \).

\begin{figure*}
	\centering
	\subfloat[] {\includegraphics[width=2.3 in, height=1.4 in]{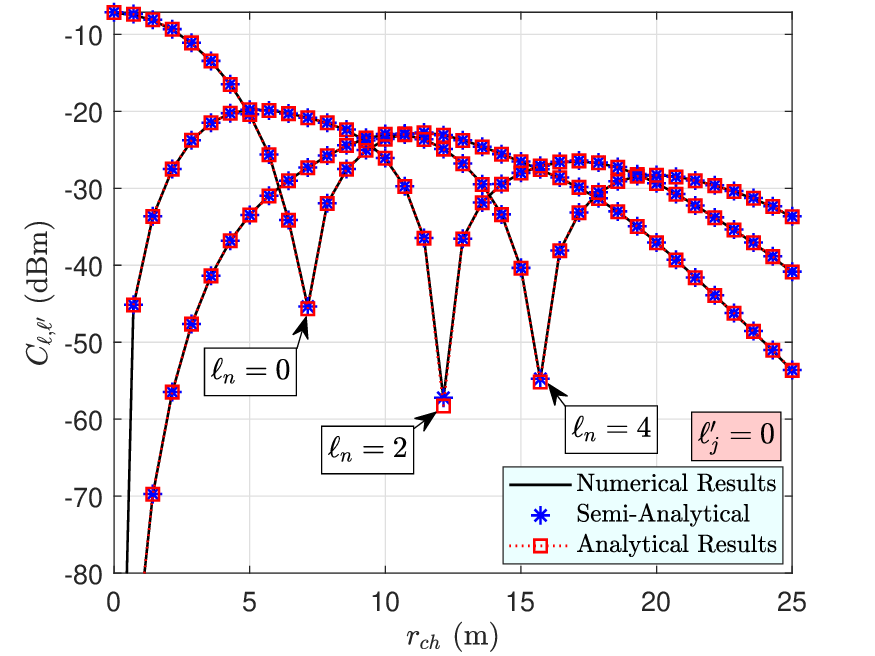}
		\label{cv1}
	}
	\hfill
	\subfloat[] {\includegraphics[width=2.3 in, height=1.4 in]{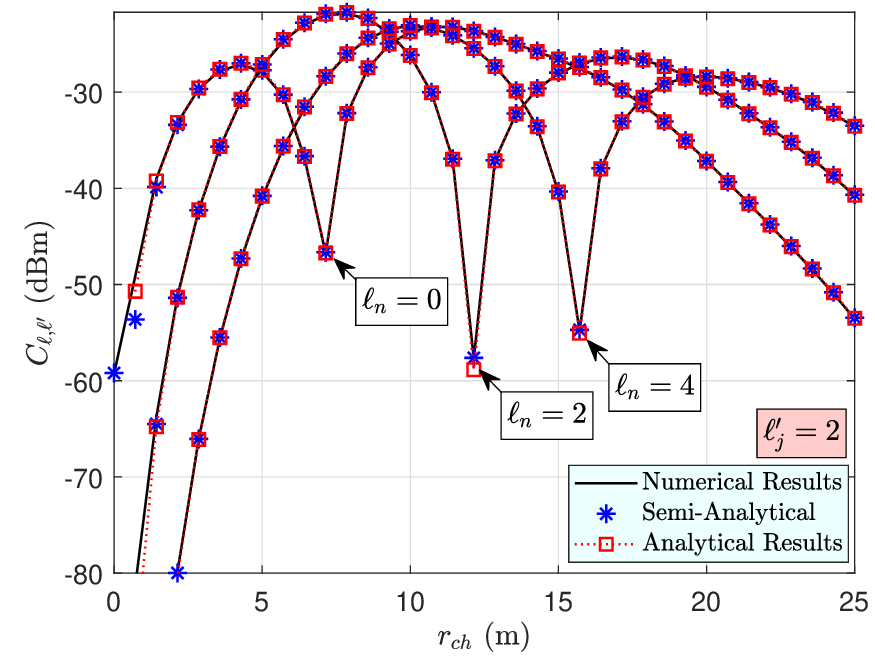}
		\label{cv2}
	}
	\hfill
	\subfloat[] {\includegraphics[width=2.3 in, height=1.4 in]{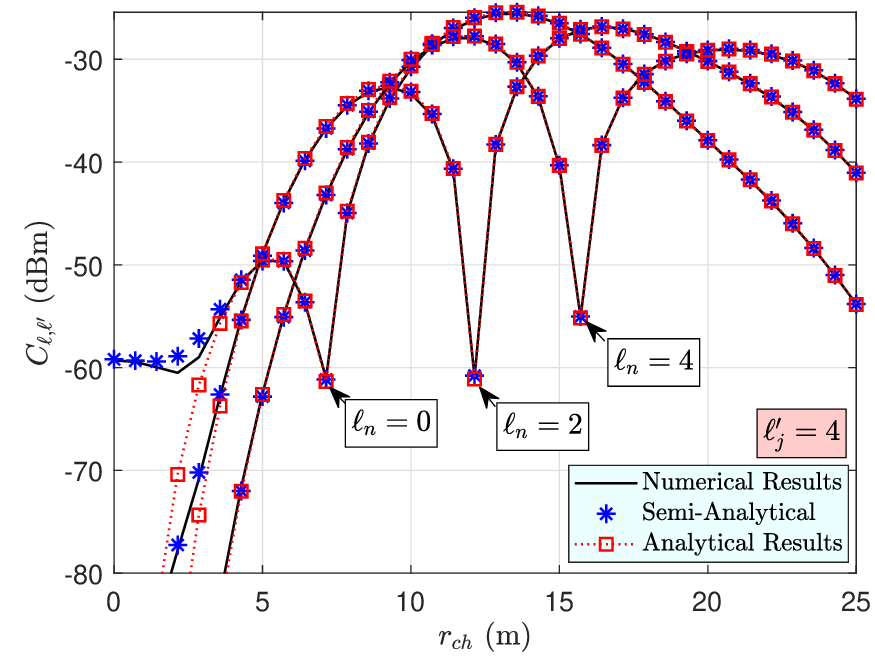}
		\label{cv3}
	}
	\caption{The crosstalk \( C_{\ell_n, \ell'_j} \) versus the pointing error \( r_{\text{ch}} \) for $p=1$, $w_0=2.5$ cm, and different values of \( (\ell_n, \ell'_j) \):  
		(a) \( \ell'_j = 0 \),  
		(b) \( \ell'_j = 2 \), and  
		(c) \( \ell'_j = 4 \).}
	\label{cv}
\end{figure*}

\begin{figure*}
	\centering
	\subfloat[] {\includegraphics[width=2.3 in, height=1.4 in]{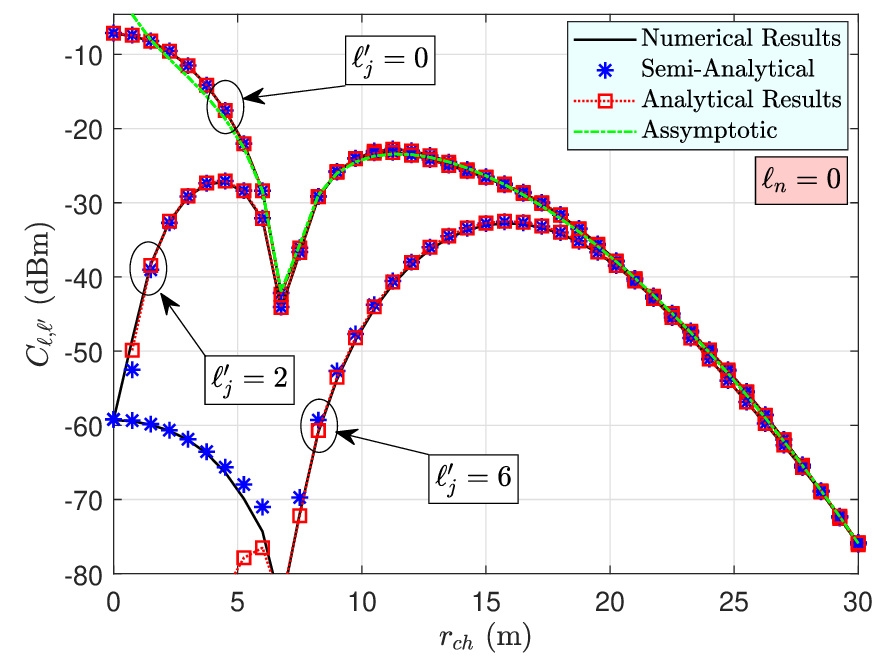}
		\label{cr1}
	}
	\hfill
	\subfloat[] {\includegraphics[width=2.3 in, height=1.4 in]{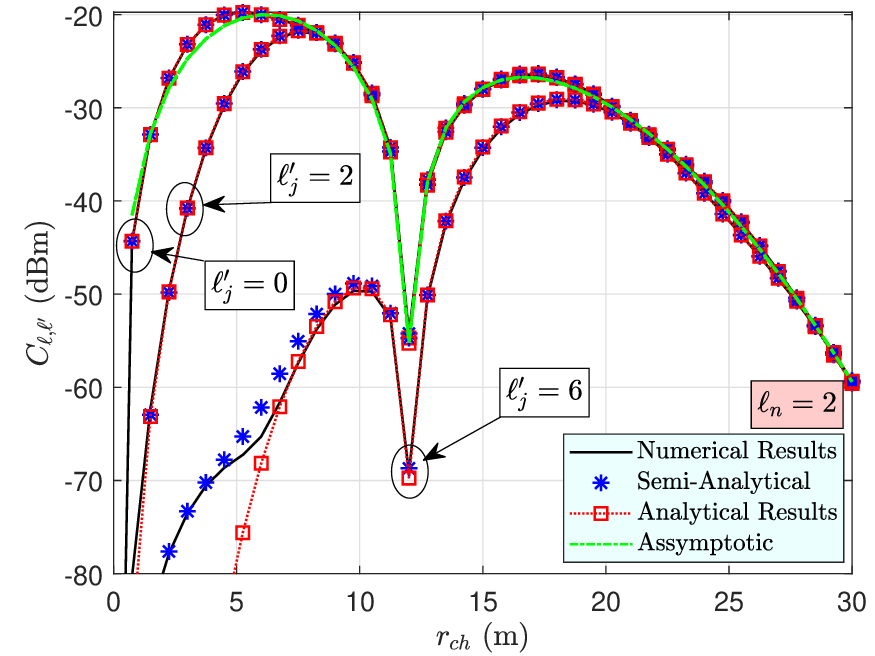}
		\label{cr2}
	}
	\hfill
	\subfloat[] {\includegraphics[width=2.3 in, height=1.4 in]{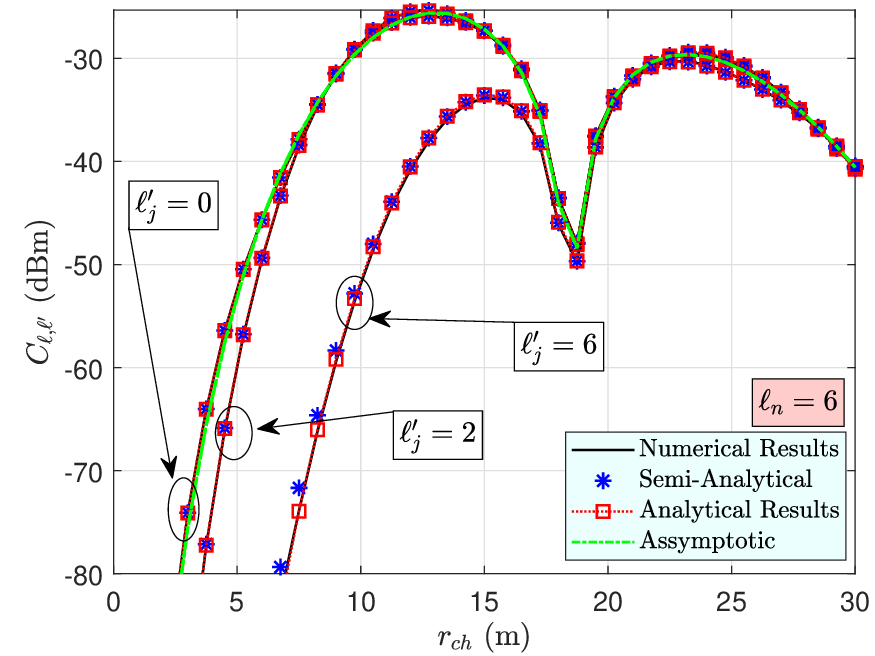}
		\label{cr3}
	}
	\caption{The crosstalk \( C_{\ell_n, \ell'_j} \) versus the pointing error \( r_{\text{ch}} \) for $p=1$, $w_0=2.5$ cm, and different values of \( (\ell_n, \ell'_j) \):  
		(a) \( \ell_n = 0 \),  
		(b) \( \ell_n = 2 \), and  
		(c) \( \ell_n = 6 \).}
	\label{cr}
\end{figure*}

Additionally, an important observation from both Figs.~\ref{cf} and \ref{cb} is that for small pointing errors, modes with smaller \( \ell_n \) exhibit better performance. In contrast, for severe pointing errors, modes with larger \( \ell_n \) perform better. This behavior is attributed to the radial characteristics of the LG beam pattern. As \( \ell_n \) increases, the doughnut-shaped intensity distribution of the LG mode shifts outward, resulting in a larger radial beam structure that becomes less sensitive to larger pointing errors.

Another parameter that influences the beam shape and its sensitivity to pointing errors is the radial index \( p \) of the LG beam pattern. To gain further insights, in Fig.~\ref{cv}, we repeat the results of Fig.~\ref{cb} for \( p = 1 \).  
As the results indicate, changing \( p \) significantly affects the values of \( C_{\ell_n, \ell'_j} \). It should be noted that by changing \( p \) from 0 to 1, the doughnut-shaped pattern of the LG beam undergoes a radial transformation. Instead of a single peak in the power distribution, the beam now exhibits two peaks along the radial direction. Consequently, the system behavior with respect to pointing errors also demonstrates two peaks in \( C_{\ell_n, \ell'_j} \).  
More importantly, despite the more complex behavior observed for \( p = 1 \), the analytical results presented in this paper closely follow the exact numerical results obtained from the two-dimensional integral. This demonstrates that the proposed analytical expressions can be effectively utilized for faster analysis and optimal design of the OAM system under consideration.

\begin{figure}
	\centering
	\subfloat[] {\includegraphics[width=1.65 in]{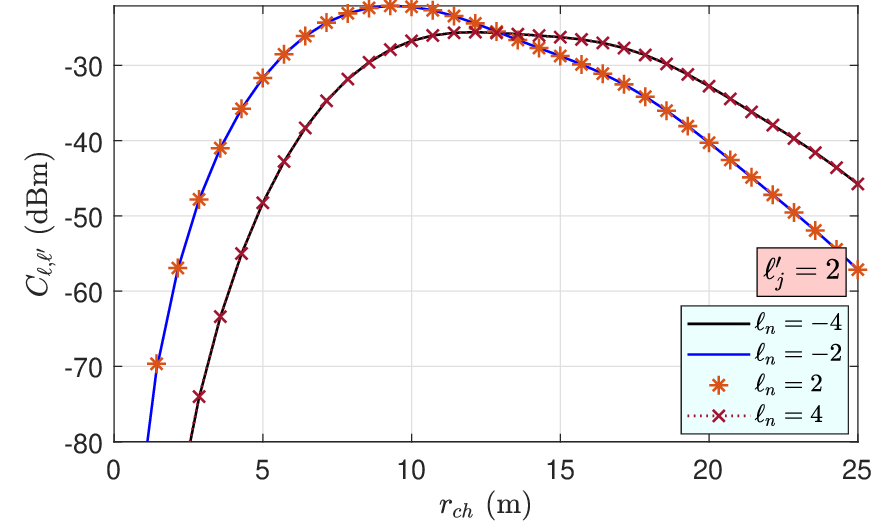}
		\label{cd1}	
		}
	\hfill
	\subfloat[] {\includegraphics[width=1.65 in]{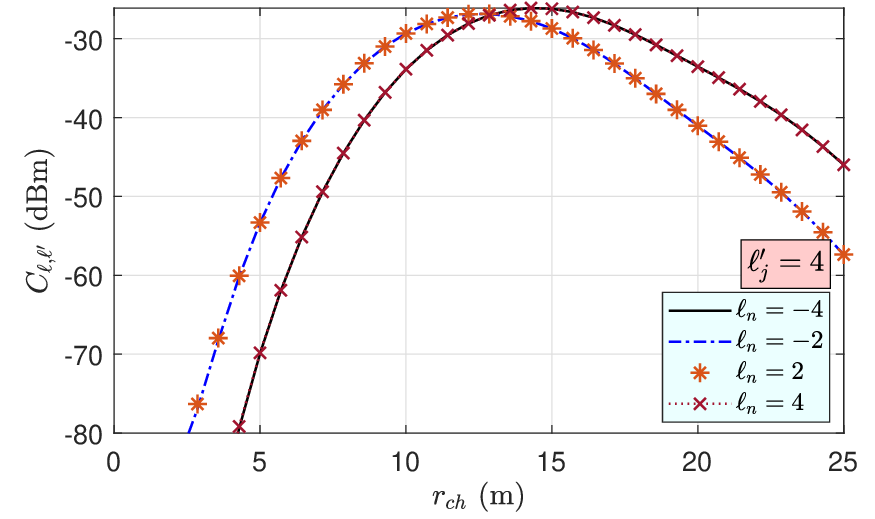}
		\label{cd2}
	}
	\caption{The crosstalk \( C_{\ell_n, \ell'_j} \) versus the pointing error \( r_{\text{ch}} \) for symmetric modes \( \ell_n = \{-4, -2, 2, 4\} \):  
			(a) \( \ell'_j = 2 \),  and
			(b) \( \ell'_j = 4 \). }
	\label{cd}
\end{figure}

In Fig.~\ref{cr}, the asymptotic behavior of crosstalk \( C_{\ell_n, \ell'_j} \) is investigated for large values of \( r_{\text{ch}} \). Unlike the previous figures, each subfigure in Fig.~\ref{cr} corresponds to a fixed value of \( \ell_n \) and varying values of \( \ell'_j \). It is important to note that \( \ell'_j \) represents the phase filter applied at the receiver to separate the \( \ell_n \) modes.  
The results clearly demonstrate that as \( r_{\text{ch}} \) increases, for any given \( \ell_n \), the output of all filters becomes similar. In other words, for large \( r_{\text{ch}} \), all phase information encoded in the OAM modes is lost, and only the radial mode information corresponding to \( \ell_n \) is preserved.  
Additionally, the results highlight that larger \( \ell'_j \) values exhibit better resilience in retaining phase information under pointing errors. This is an essential takeaway that should be considered when designing inter-satellite OAM links and determining achievable alignment accuracy. Moreover, this phenomenon can even be considered a security advantage for OAM-based links, as accessing phase information requires an eavesdropper to be precisely aligned in the propagation path, making eavesdropping significantly more challenging in practice.  
Another critical observation is that the asymptotic analytical behavior derived in Proposition~4 perfectly matches the results for \( \ell'_j = 0 \), with the behavior of other \( \ell'_j \) modes converging to it for larger \( r_{\text{ch}} \).

In Fig.~\ref{cd}, the effect of symmetric modes on crosstalk \( C_{\ell_n, \ell'_j} \) is analyzed. The results are obtained for \( \ell_n = \{-4, -2, 2, 4\} \), with subfigures \ref{cd1} and \ref{cd2} corresponding to \( \ell'_j = 2 \) and \( \ell'_j = 4 \), respectively.  
As demonstrated by the analytical results and clearly shown in Fig.~\ref{cd}, \( C_{\ell_n, \ell'_j} \simeq C_{-\ell_n, \ell'_j} \) holds true for long inter-satellite communication links. It should be noted that this result is valid only for long inter-satellite links, where the beam width at the receiver expands significantly and becomes much larger than the receiver aperture. This property does not hold for shorter terrestrial links.  
The results of this figure indicate that it is not possible to achieve orthogonal modes for \( \ell_n \) and \( -\ell_n \) in long inter-satellite OAM communication links.

\begin{figure}
	\begin{center}
		\includegraphics[width=2.3 in]{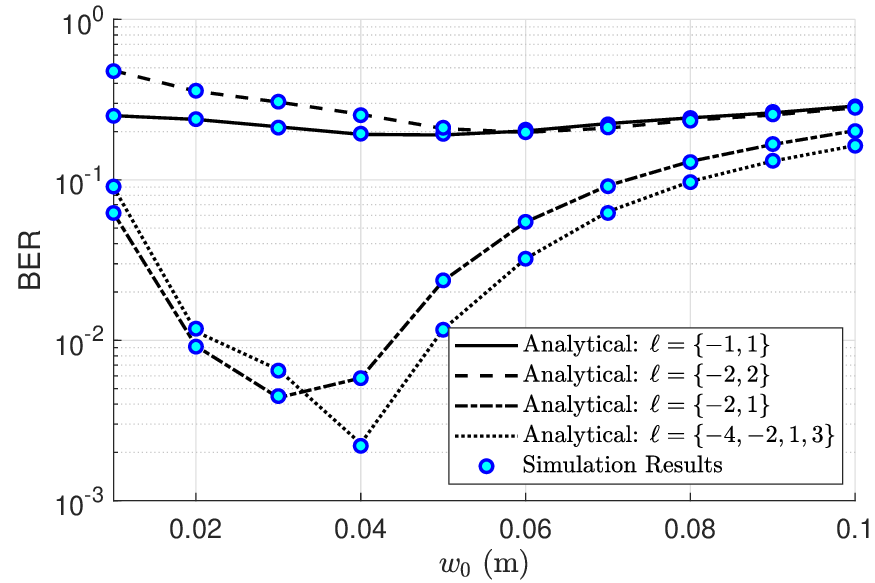}   
			\caption{Average BER versus transmitter beam waist $w_0$ for various OAM mode sets under pointing error. Asymmetric mode sets significantly outperform symmetric ones.}
		\label{fsh2}
	\end{center}
\end{figure}

\begin{figure}
	\centering
	\subfloat[] {\includegraphics[width=2.3 in]{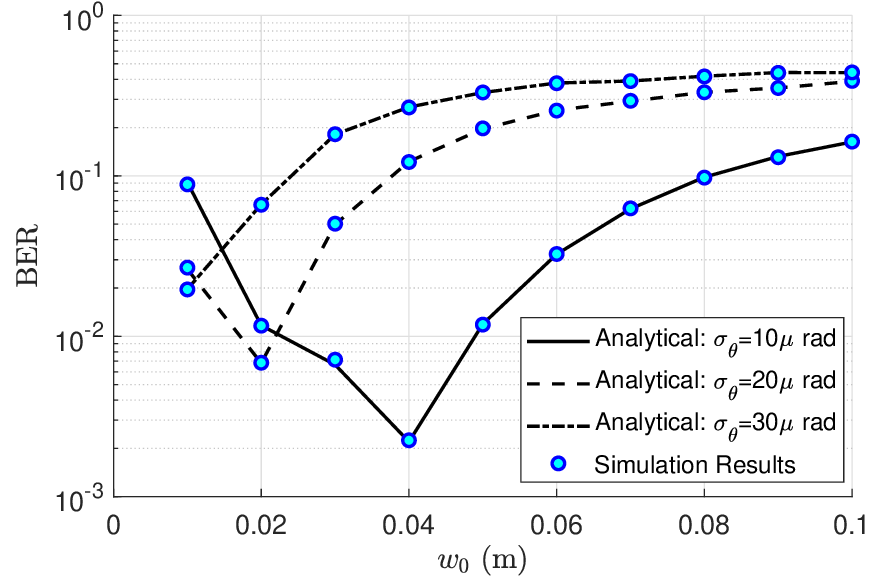}
		\label{cm1}	
	}
	\hfill
	\subfloat[] {\includegraphics[width=2.3 in]{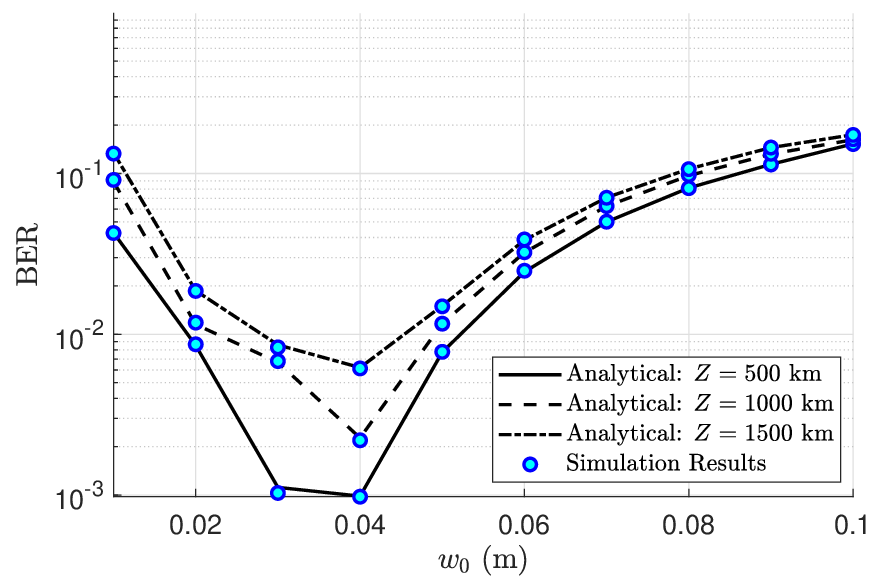}
		\label{cm2}
	}

		\caption{Average BER versus beam waist $w_0$ for (a)  different pointing error standard deviations $\sigma_\theta$, and (b) for various link distances $z$. Results correspond to the OAM mode set $\{-4, -2, 1, 3\}$.}

	\label{cm}
\end{figure}

\subsection{BER Analysis}
After conducting a comprehensive analysis and extracting detailed insights into the modeling of pointing errors and their impact on crosstalk in OAM-based inter-satellite links, we now investigate how pointing-induced crosstalk affects overall system performance in terms of BER.
It is important to highlight that this paper provides an analytical method for evaluating the BER, which relies solely on a one-dimensional integral to compute the average error probability. Using MATLAB on a standard desktop with Intel Core i7 CPU (2.8\,GHz, 16\,GB RAM), the proposed method delivers results in less than one second. 
To validate the accuracy of the derived analytical expressions, we compared the results with those obtained through a Monte Carlo simulation approach as presented in \cite{dabiri2024advancing}, where $10^7$ independent random data symbols are transmitted over $10^7$ randomly generated channel realizations. This simulation requires more than 20 minutes of runtime under the same computational setup.
As shown in Figs.~\ref{fsh2}, \ref{cm1}, and \ref{cm2}, the analytical results exhibit excellent agreement with the Monte Carlo simulation outcomes, while reducing the computational time by a factor exceeding 1200$\times$. This significant reduction is particularly important for inter-satellite communication, where satellites are in high-speed relative motion and links are continuously switching. As a result, the network topology dynamically changes, and link-dependent parameters such as divergence angle must be rapidly re-optimized and applied in near real-time.

Using the analytical BER expression developed in this paper, we first investigate the impact of mode selection on the performance of inter-satellite OAM communication systems under pointing errors, as depicted in Fig.~\ref{fsh2}. One of the most critical tunable parameters in satellite communication is the beam divergence angle, which is directly controlled by the transmitter beam waist $w_0$. Therefore, the BER performance is evaluated with respect to variations in $w_0$.
As shown in Fig.~\ref{fsh2}, symmetric mode pairs, such as $\{-2,2\}$ and $\{-1,1\}$, suffer from severe degradation in BER performance under pointing errors. This phenomenon can be explained based on a key observation from the first part of the simulations for inter-satellite communications, where the intermodal crosstalk coefficients approximately satisfy $C_{\ell_n, \ell'_j} \simeq C_{-\ell_n, \ell'_j}$. Consequently, the received signal vectors for symmetric modes become nearly indistinguishable, which increases the error probability. This is further confirmed by the analytical BER expression in \eqref{eq:ber_total_final_correct}, which simplifies as follows:
\begin{align} \label{eg1}
	&P_{\text{e,avg}|r_{\text{ch}}} \sim  \frac{1}{2} Q\left( \sqrt{ \frac{ \|\mathbf{h}_1-\mathbf{h}_2\|^2 }{4N_0} } \right),
\end{align}
where $\mathbf{h}_1 = \left[ \sqrt{C_{\ell_1,\ell'_1}}, \sqrt{C_{\ell_1,\ell'_2}} \right]^T$ and $\mathbf{h}_2 = \left[ \sqrt{C_{\ell_2,\ell'_1}}, \sqrt{C_{\ell_2,\ell'_2}} \right]^T$. Due to the approximate equality of $C_{\ell_n, \ell'_j}$ and $C_{-\ell_n, \ell'_j}$, the norm $\|\mathbf{h}_1-\mathbf{h}_2\|$ decreases, thus increasing $P_{\text{e,avg}|r_{\text{ch}}}$.
To mitigate this degradation, asymmetric mode pairs such as $\{-2,1\}$ can be employed, which (as illustrated in Fig.~\ref{fsh2}) significantly reduce the BER. This is a major insight of this study, which is supported by both analytical and simulation results. Furthermore, we also evaluated a 4-mode configuration using the set $\{-4, -2, 1, 3\}$, where two negative modes transmit one data stream and two positive modes transmit another. To ensure a fair comparison, the total transmit power in this 4-mode case was kept equal to that of the 2-mode scenario. Simulation results reveal that this configuration further improves BER performance. 
This improvement stems from the unique spatial distribution characteristics of OAM beams, which exhibit a doughnut-shaped intensity pattern. In such beams, lower-order modes tend to concentrate more energy near the beam center, while higher-order modes distribute their energy further from the center. This complementary distribution allows the modes to spatially diversify the signal energy, so that in the presence of pointing errors (which cause random beam misalignments), the combined use of inner (low-order) and outer (high-order) modes can collectively enhance the robustness of the system. As a result, the overall signal-to-noise ratio improves, leading to a lower bit error rate.

Next, Fig.~\ref{cm} examines the impact of two key physical parameters—tracking accuracy and link distance—on the BER performance of OAM-based inter-satellite systems. Motivated by the results in Fig.~\ref{fsh2}, these simulations are performed for the mode set $\{-4,-2,1,3\}$.
Specifically, Fig.~\ref{cm1} illustrates the BER versus $w_0$ for three different tracking accuracy values: $\sigma_\theta = 10$, $20$, and $30$~$\mu$rad. As expected, improving the tracking accuracy significantly reduces the BER. However, for each tracking condition, the system exhibits an optimal value of $w_0$ that minimizes the BER. Interestingly, low beam waist selection (e.g., $w_0<2$~cm) can negate the advantages of high tracking precision; for instance, a system with $\sigma_\theta = 10$~$\mu$rad and suboptimal $w_0$ may perform worse than one with $\sigma_\theta = 30$~$\mu$rad but properly tuned $w_0$. This behavior stems from the probabilistic misalignment of OAM's doughnut-shaped intensity nulls, emphasizing the necessity of adapting $w_0$ to both the tracking characteristics and the chosen mode set.

Fig.~\ref{cm2} evaluates the system BER for different link distances: $z = 500$, $1000$, and $1500$~km. Due to the high relative motion of satellites in a dynamic inter-satellite network, link distances vary rapidly, requiring frequent link switching and reconfiguration. As seen in the figure, BER performance degrades as the distance increases, which is primarily attributed to the amplification of pointing deviations over longer propagation paths. For instance, a 10~$\mu$rad tracking error results in a lateral beam offset of 5~m at 500~km, and 15~m at 1500~km. Nonetheless, the optimal $w_0$ values remain relatively close for all distances. This is because, while the misalignment increases with distance, the beam also naturally expands, partially mitigating the impact of pointing errors.

\section{Conclusion}
This study addresses the critical challenge of pointing errors in OAM-based inter-satellite communication systems by presenting a novel closed-form analytical framework for modeling intermodal crosstalk under realistic channel conditions, including beam width and receiver aperture size. Unlike traditional methods reliant on computationally intensive numerical simulations, the proposed approach offers a highly efficient and accurate means of evaluating system performance, enabling rapid design and optimization of OAM-based links. The analytical models are validated against numerical simulations, demonstrating their reliability and practical utility. Furthermore, the insights provided by this study highlight the intricate interplay between pointing errors, channel parameters, and system performance, offering valuable design guidelines for robust and scalable communication systems. By bridging theoretical modeling with practical application, this work contributes to the development of next-generation Low Earth Orbit (LEO) satellite constellations, facilitating ultra-high-capacity, low-latency, and globally connected networks.

\appendices
\section{} \label{AppA}
First, let us rewrite \eqref{cros2} as follows:
\begin{align} \label{cros3}
	&C_{\ell_n,\ell'_j}  =  A_1   \int_0^{r_a}  \Bigg| 
	\int_0^{2\pi}   
	\underbrace{\left(  \sqrt{ 2 A_2(r') } \right)^{|\ell_n|}    L_p^{|\ell_n|} \left( 2 A_2(r') \right) 
		e^{-A_2(r')} }_{T_1} \nonumber \\
	&\times \underbrace{e^{-i \ell_n \tan^{-1} \left( A_3(r') \right)
			+ i \ell_j' \phi'} }_{T_2} 
	\underbrace{e^{ -i k_\nu A_4(r') + i  \psi(z') } }_{T_3}
	\, d\phi'
	\Bigg|^2\, r'  \, dr'   
\end{align}
where
\begin{align}
	\begin{cases}
		&\!\!\! A_1 = \frac{  \eta G }{N_m^2 w^2(z')}  \frac{2 p!}{\pi (p + |\ell_n|)!},\\
		&\!\!\! A_2(r') = \frac{(r' \cos(\phi') + x_{\text{ch}})^2 + (r' \sin(\phi') + y_{\text{ch}})^2}{w^2(z')} \\
		&\!\!\! A_3(r') = \frac{r' \sin(\phi') + y_{\text{ch}}}{r' \cos(\phi') + x_{\text{ch}}} \\
		&\!\!\! A_4(r') = \frac{(r' \cos(\phi') + x_{\text{ch}})^2 + (r' \sin(\phi') + y_{\text{ch}})^2}{2 R(z')}.
	\end{cases}
\end{align}
The integrand in \eqref{cros3} consists of the product of three terms, \( T_1 \), \( T_2 \), and \( T_3 \), each of which is a function of \( r' \).

Note that for inter-satellite communications, the link length ranges from several hundred to several thousand kilometers. Therefore, the beam width is significantly larger compared to short-range links based on LG modes, which are typically on the order of a few centimeters or even less. The beam width is controlled by \( w_0 \), the beam waist at the transmitter. For a relatively large value of \( w_0 = 5 \, \text{cm} \) (resulting in a smaller \( w(z) \)), with a link length of \( 1000 \, \text{km} \), the beam width at the receiver exceeds \( 5 \, \text{m} \). This is still considered a very small value in practical inter-satellite communications. On the other hand, we have \( r' < r_a \), where \( r_a \) is on the order of a few centimeters. 

Therefore, it can be shown that all three terms \( A_1(r') \), \( A_2(r') \), and \( A_3(r') \) exhibit minimal variations with small changes in \( r' \) within the range \([0, r_a]\). Consequently, the terms \( T_1 \), \( T_2 \), and \( T_3 \) also undergo negligible changes. Thus, instead of integrating continuously, we can calculate \( C_{\ell_n,\ell'_j} \) over a limited number of discrete points \( \frac{k r_a}{K_r} \) for $k\in\{1,...,K_r\}$ and assume that the terms \( T_1 \), \( T_2 \), and \( T_3 \) remain approximately constant between \( \frac{k r_a}{K_r} \) and \( \frac{(k+1) r_a}{K_r} \). By adopting this approach, \( C_{\ell_n,\ell'_j} \) can be expressed as a one-dimensional integral as shown in \eqref{cros3}.

\begin{figure}
	\begin{center}
		\includegraphics[width=1.7 in]{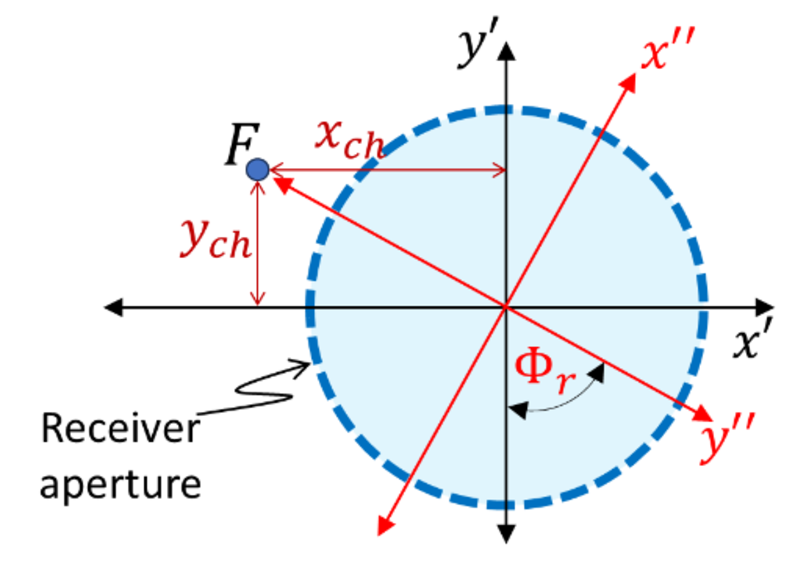}
		\caption{Definition of new coordinates \( (x'', y'') \), obtained by rotating the original coordinate system \( (x', y') \) by an angle \( \Phi_r \) such that the misalignment point $ r_{\text{ch}} = (x_{\text{ch}},y_{\text{ch}}) $ lies on the \( y'' \)-axis.}
		\label{nb2}
	\end{center}
\end{figure}
\section{} \label{AppB}
Without loss of generality, we define a new coordinate system in the receiver aperture, \( (x'', y'') \), as shown in Fig. \ref{nb2}. This new coordinate system is obtained by rotating the original coordinates \( (x', y') \) by an angle \( \Phi_r \), such that the beam misalignment center \( r_{\text{ch}} = (x_{\text{ch}}, y_{\text{ch}}) \) lies on the \( y'' \)-axis. Consequently, the new coordinate system in cylindrical coordinates \( (r'', \phi'') \) is defined such that \( r'' = r' \) and \( \phi'' = \phi' + \Phi_r \). By doing so, \eqref{cros2} can be rewritten as \eqref{crosp1}.

\begin{figure*}[!t]
	\normalsize
	\begin{align} \label{crosp1}
		&C_{\ell_n,\ell'_j}  =  \frac{  \eta G }{N_m^2 w^2(z')}  \int_0^{r_a}  \Bigg| 
		\int_0^{2\pi}   
		\sqrt{\frac{2 p!}{\pi (p + |\ell_n|)!}}  \left( \frac{\sqrt{2} \sqrt{  r'^2 + 2 r' r_{\text{ch}} \sin(\phi'') + r_{\text{ch}}^2    }}{w(z')} \right)^{|\ell_n|}  \\
		&\times L_p^{|\ell_n|} \left( \frac{2 \left( r'^2 + 2 r' r_{\text{ch}} \sin(\phi'') + r_{\text{ch}}^2 \right)}{w(z')^2} \right) 
		\exp\left(-\frac{r'^2 + 2 r' r_{\text{ch}} \sin(\phi'') + r_{\text{ch}}^2}{w(z')^2}\right) \nonumber \\
		&\times \exp\left({-i \ell_n \tan^{-1} \left( \frac{r' \sin(\phi'') + r_{\text{ch}}}{r' \cos(\phi'') } \right)
			+ i \ell_j' \phi''} \right) 
		\exp\left( -i k_\nu \frac{r'^2 + 2 r' r_{\text{ch}} \sin(\phi'') + r_{\text{ch}}^2}{2 R(z')} + i  \psi(z') \right)
		\, d\phi''
		\Bigg|^2\, r'  \, dr'  \nonumber 
	\end{align}
	\hrulefill
\end{figure*}	

Given the long link length and the tracking system error models (even for advanced tracking systems), the probability that \( r_{\text{ch}} = (x_{\text{ch}}, y_{\text{ch}}) \) is less than one meter is very low. On the other hand, since \( r' \leq r_a \), it is reasonable to assume with high accuracy that \( r_{\text{ch}} \gg r' \cos(\phi'') \) and \( r_{\text{ch}} \gg r' \sin(\phi'') \). 
Therefore, the term \( r'^2 + 2 r' r_{\text{ch}} \sin(\phi'') + r_{\text{ch}}^2 \) can be approximated as \( r_{\text{ch}}^2 \). Similarly, we have \( \left( \frac{r' \sin(\phi'') + r_{\text{ch}}}{r' \cos(\phi'')} \right) \gg 1 \), and thus \( \tan^{-1} \left( \frac{r' \sin(\phi'') + r_{\text{ch}}}{r' \cos(\phi'')} \right) \simeq \frac{\pi}{2} \). Based on this, \eqref{crosp1} can be approximated as shown in \eqref{cros5}.

\begin{figure*}[!t]
	\normalsize
	\begin{align} \label{cros5}
		&C_{\ell_n,\ell'_j}  =  A_1  \int_0^{r_a}  \Bigg| 
		\int_0^{2\pi}   
		\underbrace{ \left( \frac{\sqrt{2} r_{\text{ch}} }{w(z')} \right)^{|\ell_n|}  
			L_p^{|\ell_n|} \left( \frac{2 r^2_{\text{ch}} }{w(z')^2} \right) 
			\exp\left(-\frac{r_{\text{ch}}^2}{w(z')^2}\right) 
			\exp\left({-i \frac{\ell_n \pi}{2} + i  \psi(z')} \right) }_{T_4} \nonumber \\
		&\times  \exp\left( -i k_\nu \frac{r'^2 + 2 r' r_{\text{ch}} \sin(\phi'') + r_{\text{ch}}^2}{2 R(z')}  \right)
		\exp\left({ i \ell_j' \phi'' } \right) 
		\, d\phi''
		\Bigg|^2\, r'  \, dr'  
	\end{align}
	\hrulefill
\end{figure*}

Note that in \eqref{cros5}, for the term \( \exp\left( -i k_\nu \frac{r'^2 + 2 r' r_{\text{ch}} \sin(\phi'') + r_{\text{ch}}^2}{2 R(z')} \right) \), the mentioned approximations are not applicable because \( k_\nu \), the wave number, is a large value and is highly sensitive to such approximations. As can be observed, the term \( T_4 \) in \eqref{cros5} is independent of \( \phi'' \), and therefore \eqref{cros5} can be simplified to \eqref{cros6}.
\begin{figure*}[!t]
	\normalsize
	\begin{align} \label{cros6}   \resizebox{0.93\hsize}{!}{$
		C_{\ell_n,\ell'_j}  =  A_1       
		\left[ \left( \frac{\sqrt{2} r_{\text{ch}} }{w(z')} \right)^{|\ell_n|}  
		L_p^{|\ell_n|} \left( \frac{2 r^2_{\text{ch}} }{w(z')^2} \right) 
		\exp\left(-\frac{r_{\text{ch}}^2}{w(z')^2}\right)  \right]^2  
		  \int_0^{r_a} \Bigg| \underbrace{
			\int_0^{2\pi}  \exp\left( -i k_\nu \frac{r'^2 + 2 r' r_{\text{ch}} \sin(\phi'') + r_{\text{ch}}^2}{2 R(z')}  \right)  
			\exp\left({ i \ell_j' \phi'' } \right) 
			\, d\phi''  }_{T_5}
		\Bigg|^2\, r'  \, dr'  $}
	\end{align}
	\hrulefill
\end{figure*}
The integral \( T_5 \) in \eqref{cros6} can be expressed as follows:
\begin{align}
	\label{po1}
	&T_5 = \exp\left( -i k_\nu \frac{r'^2  + r_{\text{ch}}^2}{2 R(z')}  \right)  
	\exp\left({ i \ell_j' \phi'' } \right) \nonumber \\
	& \times \underbrace{\int_0^{2\pi}  \exp\left( -i  \frac{2 k_\nu r' r_{\text{ch}}  }{2 R(z')} \sin(\phi'') \right)  
		\exp\left({ i \ell_j' \phi'' } \right) 
		\, d\phi''}_{T_6}
\end{align}
Using \cite[Eq. (8.511.4)]{gradshteyn2007table}, \( \exp\left( -i \frac{2 k_\nu r' r_{\text{ch}}}{2 R(z')} \sin(\phi'') \right) \) can be expanded as follows:
\begin{align} \label{po2}
	&\exp\left( -i \frac{2 k_\nu r' r_{\text{ch}}}{2 R(z')} \sin(\phi'') \right) 
	=   \nonumber \\
	&~~~~~~~~~~ \sum_{n=-\infty}^\infty J_n\left( \frac{2 k_\nu r' r_{\text{ch}}}{2 R(z')} \right) \exp\left( i n \phi'' \right),
\end{align}
where \( J_n(\cdot) \) represents the Bessel function of the first kind of order \( n \).
By substituting \eqref{po2} into the integral, \( T_6 \) is obtained as:
\begin{align} \label{po3}
	T_6 = \sum_{n=-\infty}^\infty J_n\left( \frac{2 k_\nu r' r_{\text{ch}}}{2 R(z')} \right)
	\underbrace{ \int_0^{2\pi}  \exp\left( i (n+\ell_j') \phi'' \right) \, d\phi''}_{T_7}.
\end{align}
The integral $T_7$ is obtained as:
\begin{align} \label{po4}
	\int_0^{2\pi} \exp\left( i (n + \ell_j') \phi'' \right) \, d\phi'' = 
	\begin{cases} 
		2\pi, & \text{if } n + \ell_j' = 0, \\
		0, & \text{otherwise}.
	\end{cases}
\end{align}
Finally, by substituting \eqref{po1}, \eqref{po3}, and \eqref{po4} into \eqref{cros6}, and after a series of simplifications, we arrive at \eqref{cros8}.

\section{} \label{AppC}
To derive the closed form, we utilize \eqref{cros8} from Proposition 2. Following the reasoning provided in Appendix A for Proposition 1, it can be shown that the term \( r' \left( J_{\ell_j'}\left( \frac{k r_{\text{ch}}}{R(z')} r' \right) \right)^2 \) exhibits minimal variation for \( r' < r_a \). Therefore, the range \( [0,r_a] \) can be divided into \( K_r \) subintervals, and it can be demonstrated that the term \( r' \left( J_{\ell_j'}\left( \frac{k r_{\text{ch}}}{R(z')} r' \right) \right)^2 \) remains approximately constant between \( \frac{r_a k}{K_r} \) and \( \frac{r_a (k+1)}{K_r} \). Consequently, \eqref{cros8} can be expressed in the closed form \eqref{cros9}, resulting from the summation of \( K_r \) distinct terms.

\section{} \label{AppD}
Let us define $x_k = \frac{k_\nu  r_{\text{ch}} r_a}{R(z') K_r} k$.
The Bessel function can be approximated as \cite{abramowitz1948handbook}:
\begin{align}
	J_{\ell_j'}(x_k) \approx \sqrt{\frac{2}{\pi x_k}} \cos\left(x_k - \frac{\ell_j' \pi}{2} - \frac{\pi}{4}\right).
\end{align}
Squaring this approximation:
\begin{align} \label{s1}
	(J_{\ell_j'}(x_k))^2 &\simeq \frac{2}{\pi x_k} \cos^2\left(x_k - \frac{\ell_j' \pi}{2} - \frac{\pi}{4}\right) \nonumber \\
	& = \frac{2}{\pi x_k} \cdot \frac{1}{2}\bigl[1 + \cos(2x_k - \ell_j' \pi - \tfrac{\pi}{2})\bigr] \nonumber \\
	&= \frac{1}{\pi x_k}\bigl[1 + (-1)^{\ell'_j}\sin(2x_k )\bigr]  \simeq \frac{1}{\pi x_k}.
\end{align}
Using \eqref{s1}, the asymptotic expression for \( C_{\ell_n,\ell'_j} \) is obtained in \eqref{cros10}.

\balance

\end{document}